\documentclass[conference]{ieeeconf}

\IEEEoverridecommandlockouts  
\usepackage[utf8]{inputenc} % allow utf-8 input
\usepackage[T1]{fontenc}    % use 8-bit T1 fonts
\usepackage{hyperref}       % hyperlinks
\usepackage{url}            % simple URL typesetting
\usepackage{booktabs}   
\usepackage{amsmath}% professional-quality tables
\usepackage{amsfonts}
\usepackage{xcolor}
\usepackage{soul}
\usepackage{bm}
\usepackage{nicefrac}       % compact symbols for 1/2, etc.
\usepackage{microtype}      % microtypography
\usepackage{lipsum}
\usepackage{fancyhdr}       % header
\usepackage{graphicx}       % graphics
\graphicspath{{media/}}     % organize your images and other figures under media/ folder
\usepackage{caption}
\usepackage{subcaption}
\usepackage{stmaryrd}
\usepackage{pgfplots}
\pgfplotsset{compat=1.18}
\usepackage{algorithm} 
\usepackage{algpseudocode}

\usepackage[utf8]{inputenc} % allow utf-8 input
\usepackage[T1]{fontenc}    % use 8-bit T1 fonts
\usepackage{hyperref}       % hyperlinks
\usepackage{url}            % simple URL typesetting
\usepackage{booktabs}   
\usepackage{amsmath}% professional-quality tables
\usepackage{amsthm}
\usepackage{amsfonts}
\usepackage{mdframed}
\usepackage{graphicx}
\usepackage{xcolor}
\usepackage{epstopdf}
\usepackage{amssymb}
\usepackage{matlab-prettifier}
\usepackage{siunitx}
\usepackage{soul}
\usepackage{nicefrac}       % compact symbols for 1/2, etc.
\usepackage{microtype}      % microtypography
\usepackage{lipsum}
\usepackage{fancyhdr}  
\usepackage[ragged]{sidecap}    

\usepackage{graphicx}       % graphics
\graphicspath{{media/}}     % organize your images and other figures under media/ folder
\usepackage{caption}
\usepackage{subcaption}
\usepackage{stmaryrd}
\usepackage{algorithm} 
\usepackage{algpseudocode} \usepackage{cuted}

\newtheoremstyle{thmv2}% name
{}%      Space above, empty = `usual value'
{.1em}%      Space below
{\itshape}% Body font
{}%         Indent amount (empty = no indent, \parindent = para indent)
{\bfseries}% Thm head font
{.}%        Punctuation after thm head
{.1em}% Space after thm head: \newline = linebreak
{}%         Thm head spec
\theoremstyle{thmv2}

\usepackage{mathtools}
\usepackage{tikz}
\usetikzlibrary{positioning,arrows.meta}
\usetikzlibrary{shapes,arrows,automata}
\usetikzlibrary{intersections}
\usetikzlibrary{calc,3d,fit,backgrounds,matrix}
\usepackage{cancel}
\usepackage[mathscr]{eucal}
\usepackage{algpseudocode}

\newcommand{\setwitness}{\mathcal{W}}
\newcommand{\setagent}{\mathcal{I}}
\newcommand{\eventually}{\Diamond}
\newcommand{\until}{\mathbin{\sf U}}

\newcommand{\spaceX}{\mathbb{X}} %state space

\newcommand{\mbP}{\mathbb{P}}

\newcommand{\mdpM}{\mathbf{M}}
\newcommand{\Tr}{{\mathbb T}}
\newcommand{\AP}{\mathrm{AP}}
\newcommand{\letter}{l}
\newcommand{\word}{\boldsymbol{l}}

\newcommand{\DFA}{\mathcal A}

\newcommand{\spaceA}{\mathbb{A}} %action space

\newcommand{\valuemapping}{v}

\newcommand{\tensorV}{\mathscr{V}}
\newcommand{\qmapping}{\mathcal{L}_Q}

\newcommand{\expectation}{\mathbb{E}}

\newcommand{\op}{\mathbf{T}}

\newtheorem{definition}{Definition}
\newtheorem{myremark}{Remark}
\newenvironment{tightdisplay}{%
  \begingroup
  \setlength\abovedisplayskip{6pt}%
  \setlength\belowdisplayskip{6pt}%
  \setlength\abovedisplayshortskip{4pt}%
  \setlength\belowdisplayshortskip{4pt}%
}{\endgroup}

\newtheorem{example}{Example}

\newtheorem{lemma}{Lemma}
\newtheorem{theorem}{Theorem}

\newtheorem{proposition}{Proposition}

\begin{document}
\title{
%Tensor-Based Dual-Tree Dynamic Programming for Homogeneous Multi-Agent Synthesis under Counting LTL\\
Compressing Correct-by-Design Synthesis for Stochastic Homogeneous Multi-Agent Systems with Counting LTL
}
\author{Xinyuan Qiu$^{\dagger}$, Ruohan Wang$^{\dagger}$, Siyuan Liu, Sofie Haesaert \thanks{$^{\dagger}$ The authors contributed to this paper equally.}\thanks{This work is supported by the European project SymAware under the grant number 101070802 and COVER under the grant number 101086228. The authors are with the Department of Electrical Engineering, Control
Systems Group, Eindhoven University of Technology, The Netherlands.  Emails: x.qiu@student.tue.nl; $\{$r.wang2,s.liu5,s.haesaert$\}$@tue.nl.}
}

\maketitle
\begin{abstract}
Correct-by-design synthesis provides a principled framework for establishing formal safety guarantees for stochastic multi-agent systems (MAS). However, conventional approaches based on finite abstractions often incur prohibitive computational costs as the number of agents and the complexity of temporal logic specifications increase. In this work, we study homogeneous stochastic MAS under counting linear temporal logic (cLTL) specifications, and show that the corresponding satisfaction probability admits a structured tensor decomposition via leveraging deterministic finite automata (DFA). Building on this structure, we develop a dual-tree-based value iteration framework that reduces redundant computation in the process of dynamic programming. Numerical results demonstrate the proposed approach's effectiveness and scalability for complex specifications and large-scale MAS.
%Recent work has demonstrated the effectiveness of low rank value iteration for stochastic systems under syntactically co-safe LTL (scLTL) specifications, while remaining limitations on handling the counting constraints that arise in multi-agent coordination. As the complexity of specifications grows, the redundant computation of that approach becomes increasingly severe, leading to poor scalability in multi-agent systems (MAS). %In this work, we proposed a Dual Tree-based value iteration framework exploiting the decomposition structure from the satisfaction probability of decoupled homogeneous stochastic MAS under counting LTL (cLTL) specifications.
\end{abstract}

% \begin{keywords}
% no need to include keywords in CDC
% \end{keywords}

\section{Introduction}
\label{sec:introduction}
The applications of multi-agent systems (MAS) in modern societies have expanded rapidly, ranging from entertainment-oriented drone formations to macroeconomics systems involving multiple decision-making entities, such as traffic management~\cite{bauza2013traffic} and wilderness search and rescue~\cite{macwan2014multirobot}. These applications require the coordination of increasingly large collections of agents, ensuring system safety and developing scalable synthesis techniques have become increasingly important. For safety-critical systems, formal verification approaches~\cite{baier2008principles,belta2017formal} offer mathematically rigorous tools for proving that a system satisfies prescribed safety properties. More recently, temporal logics have emerged as a powerful framework for formally specifying complex tasks and safety requirements, as they enable the rigorous and unambiguous description of propositions and their evolution over time. Building on this framework, various correct-by-design controller synthesis approaches ~\cite{belta2017formal,yin2024formal,kloetzer2007temporal,chen2011formal,liu2025controller,guo2015multi,wang2025correct} have been developed to construct provably-correct controllers that satisfy formal specifications for both single-agent and multi-agent systems.

Despite these advances, standard Temporal Logics such as Linear Temporal Logic (LTL) have limitations in expressing collective behaviors of MAS. When the number of agents increases, the standard LTL is either too complex in structure or fails to accurately express group-level tasks. Counting LTL (cLTL)~\cite{sahin2019multirobot, sahin2017provably} was therefore introduced to compactly specify collective behaviors in MAS. In particular, for group tasks that are independent of identities of agents and only rely on the number of agents, cLTL provides a highly expressive formalism. In~\cite{sahin2019multirobot, sahin2017provably}, Integer Linear Programming (ILP) based optimization approaches was proposed to synthesize trajectories for cLTL specifications. However, the computational costs of ILP grow rapidly with system complexity, and these works are developed for deterministic systems rather than stochastic models. In this paper, we aim at developing an efficient and scalable correct-by-design control synthesis approach for stochastic multi-agent systems under cLTL specifications. 

Existing approaches to correct-by-design synthesis for stochastic systems typically construct finite Markov decision process (MDP) as abstractions of continuous-state systems, with stochastic simulation relations providing certified bounds between the abstract model and the original system~\cite{haesaert2020robust,abate2008probabilistic,schon2023verifying,wang2025correct,lavaei2022automated}. However, as the system dimension and scale increase, the size of the resulting abstract MDP grows rapidly, leading to exponential increases in computational and memory complexity. This well-known \emph{curse of dimensionality} poses a major challenge for correct-by-design synthesis under temporal logic specifications. Although several works \cite{haesaert2020robust,schon2023verifying,lavaei2022automated} have attempted to mitigate this issue, they mainly improve the scalability of abstraction construction rather than the computational efficiency of the synthesis algorithm itself.

% Prior approaches investigate correct-by-construction control synthesis of stochastic systems by abstracting the continuous nature of the state (and possibly action) spaces as finite Markov Decision Processes (MDPs), with stochastic simulation relations providing certified bounds that link abstract and original models.  \cite{haesaert2020robust,abate2008probabilistic,schon2022correct, schon2023verifying,wang2025correct,lavaei2022automated}
% In this paper, we consider abstracting stochastic systems discretely as Markov Decision Processes (MDPs)~\cite{schon2022correct, schon2023verifying}. 
%Given cLTL to quantify collective behaviors, we combine MDPs with a product automata to model and solve the correct-by-design control problem. It is obvious that the state space of abstracted MDP grows correspondingly as the scale of the system increases, leading to exponential growth in both computational and memory costs. This is well-known as the curse of dimensionality, causing the control synthesis computationally intractable.

To overcome this limitation, prior work in dynamic programming and temporal-logic control has explored low-rank tensor approximations to obtain compact value-function representations and accelerate value iteration~\cite{gorodetsky2018high,rozada2024tensor,ong2015value,alora2016automated}. These methods are motivated by the observation that value functions often exhibit low intrinsic dimensionality, which can be exploited through tensor decompositions to achieve computational savings~\cite{kolda2009tensor,eckart1936approximation}.
More recently, \cite{wang2025unraveling} proposed a low-rank value iteration framework for correct-by-design control synthesis of stochastic systems under syntactically co-safe LTL (scLTL) specifications. This framework integrates a deterministic finite automaton (DFA)-informed dynamic programming operator with a tree-based value iteration scheme to provably control tensor rank, guarantees a lower bound on the specification satisfaction probability, and significantly improves the computational and memory efficiency of dynamic programming for high-dimensional systems. Nevertheless, that approach is limited to scLTL specifications and still suffers from considerable redundant computation.
In particular, as temporal specifications become more complex, the induced DFA typically grows in both the number of states and outgoing transitions, which increases tensor-rank growth and tree size, thereby limiting scalability to more complex specifications that are inherent to multi-agent settings.
In this paper, we build upon the framework of~\cite{wang2025unraveling} and propose an improved optimization algorithm that can extend to large-scale homogeneous MAS under a class of counting LTL specifications, while further reducing both computational and memory costs.
To the best of our knowledge, this is the first paper dealing with correct-by-construction synthesis of counting LTL specifications for stochastic multi-agent systems.

%Nevertheless, existing tensor-based methods for general dynamic programming and temporal-logic control are not directly applicable to the computation of probabilistic safety guarantees for stochastic systems.

%To address this issue, an intuitive idea is to avoid directly dealing with high-dimensional systems. Under the MDP framework, a common approach is to conduct compact value approximations through decompositions. 
%Neural network–based methods~\cite{zhao2014neural, hure2021deep} have been applied to the approximation of value functions in optimal control, but these methods typically rely on assumptions about the approximation capability of neural networks. 
%In the context of tensor approximation, low-rank optimization has been studied~\cite{sidiropoulos2017tensor, kolda2009tensor}. Building on this, exploiting the characteristic that value functions typically have a low intrinsic dimensionality~\cite{eckart1936approximation}, \cite{ong2015value,alora2016automated} utilized low-rank tensor representations of value functions to implement value iteration based on dynamic programming, achieving computational savings. 

%More recently, \cite{wang2025unraveling} proposed a DFA-informed dynamic programming operator that tightly integrates temporal logic specifications with stochastic systems, and developed a tree-based value iteration approach that guarantees a lower bound of satisfaction probability while providing optimized control policies with substantially reduced computation.

\par 
\noindent {\bfseries Contributions.} (1) We show that, for decoupled homogeneous stochastic MAS under cLTL specifications, the satisfaction probability of which can be represented as a sum of rank-$1$ tensors induced by labeled paths of the DFA,  with each path admitting an agent-wise decomposition. (2) Based on this decomposition, we propose a dual-tree structure for approximate value iteration that compresses the required memory. (3) We demonstrate the effectiveness and scalability of our approach through large-scale case studies, showing significant reductions in both the memory usage and runtime compared with \cite{wang2025unraveling}. 
%\sofie{To be written}

\noindent {\bfseries Organization.} This paper is organized as follows. Sec.~\ref{sec:pre} introduces the preliminaries and problem setup. Sec.~\ref{sec:SPVI} presents the computation of satisfaction probability of cLTL with tensor-based value function, and Sec.~\ref{sec:tree} proposes a dual-tree structure to iteratively compute the value. Sec.~\ref{sec:case} demonstrates the effectiveness of the method through case studies and Sec.~\ref{sec:conclusion} concludes the paper and discusses potential future work.

\section{Homogeneous multi-agent MDPs with cLTL}% with Counting Linear Temporal Logics } %Preliminaries and Problem Statement}
\label{sec:pre}
% We first provide the necessary preliminaries used throughout this work and then formulate the problem to be addressed. Specifically, we model the system as a Markov Decision Process (MDP), introduce counting Linear Temporal Logic (counting LTL) for multi-agent specification, and state the problem under certain assumptions. \sofie{paragraph can be shortened by removing unnecessary words}\ruohan{I don't think any of it is necessary so I commented all out}

\subsection{Notations}
For a vector $\mathrm{v}\in\mathbb{R}^n$, we define its $1$-norm as $\|\mathrm{v}\|_1:=\sum_{j=1}^n |\mathrm{v}_j|$. For vectors $\mathrm{v}^i\in \mathbb{R}^{n_i}$ we define $\operatorname{col}(\mathrm{v}^1,\ldots,\mathrm{v}^N):=[\mathrm{v}^{1\top}\ldots \mathrm{v}^{N\top}]^\top \in\mathbb{R}^{\sum_{i=1}^N n_i}$. For vectors $\mathrm{v}^1\in \mathbb{R}^{n_1},\ldots,\mathrm{v}^N\in\mathbb{R}^{n_N}$, we define the outer product $\mathrm{v}^1\otimes \ldots \otimes \mathrm{v}^N \in \mathbb{R}^{\prod_{i=1}^N n_i}$ as a rank-$1$ tensor. We denote the set of nonnegative integers as $\mathbb{N}$. For $a,b\in\mathbb{N}$ with $a< b$, we denote the integer interval $\llbracket a,b \rrbracket:=\{a,a+1,\ldots,b\}$. For a set $S$, we denote its cardinality by $|S|$. For a given set $S$, let $S^* = \{\epsilon, s_0, s_0s_1, \ldots \mid s_i \in S\}$ denote the set of all finite sequences over $S$ (including the empty sequence $\epsilon$), and let $S^{\omega} = \{s_0 s_1 \ldots \mid s_i \in S\}$ denote the set of all infinite omega regular sequences over $S$. 

\subsection{Homogeneous Multi-Agent MDPs}
\begin{definition}[Markov decision process (MDP)]\hspace{1mm} An MDP is a tuple $\mdpM := (\mathbb{X}, \mathbb{A}, \mathbb{T}, \mathbb{X}_0)$, consisting of
\begin{itemize}
  \item a state space $\mathbb{X}$ with elements $x\in\spaceX$;
  % the finite set of possible states, where each state is denoted by $x \in \mathbb{X}$.
  \item an action space $\mathbb{A}$ with elements $a\in\spaceA$;
  % the finite set of possible actions, where each action is denoted by $a \in \mathbb{A}$.
  \item a stochastic kernel $\Tr:\spaceX \times \spaceA \times \spaceX \rightarrow [0,1]$, that assigns to each state-action pair $x\in \spaceX$ and $a\in \spaceA$ a probability distribution $\Tr(\cdot | x,a)$ over $\spaceX$;
  % \mathbb{T}(x^+ \mid x,a)$, 
  % the transition probability of moving to state $x^+$ when taking action $a$ in state $x$.
  \item a set of initial states $\mathbb X_0$ with initial state $x[0] \in \mathbb{X}_0$.
\end{itemize}
\label{eq:MDP}
\end{definition}
\noindent The \textit{execution (state trajectory)} of an MDP is $$\boldsymbol{x}_{[0,t]}:= \{ x[t]  \mid t=0,1,\ldots \}$$ initialized with the initial state $x[0]\in \spaceX_0$. In an execution, each consecutive state, $x[t+1]\in\spaceX$, is obtained as a realization $x[t+1] \sim \Tr(\cdot \mid x[t],a[t])$ of the stochastic kernel. We say that an MDP is finite if both $\spaceX$ and $\spaceA$ are finite. In this paper, we consider a \textit{homogeneous} multi-agent system of $N$ dynamically-decoupled agents, where homogeneity refers to all agents sharing an identical state space $\spaceX_c$, action space $\spaceA_c$, and stochastic kernel $\Tr_c$ depending solely on each agent's own state and action:
$$\mdpM^i:=(\spaceX_c,\spaceA_c,\Tr_c,\mathbb X_{c,0}), \quad \forall i\in \setagent$$
\noindent with $x^i[0]\in\mathbb X_{c,0}$ the initial state of $i$-th agent, $\setagent:=\llbracket 1,N \rrbracket$ the set of agent indices. The multi-agent system is represented as the composition of $N$ such finite MDPs:
\begin{align} \label{eq:MAS}
    \mdpM=(\spaceX,\spaceA,\Tr,\mathbb X_0 )
\end{align}
with $\spaceX=\prod_{i=1}^{N}\spaceX_c$, $\spaceA=\prod_{i=1}^{N}\spaceA_c$, $\Tr=\prod_{i=1}^{N}\Tr_c$, and $\mathbb X_0 :=\prod_{i=1}^{N} \mathbb X_{c,0}$. 
%x[0]=\operatorname{col}(x^1[0]\ldots x^N[0])$. 
%The \textit{execution (state trajectory)} of the system \eqref{eq:MAS} is 
% $\boldsymbol{x}_{[0,t]}:=\{x[t] | t=0,1,\ldots\}$ initialized with $x_0\in \spaceX_0$. Consecutive states $x[t+1]$ are obtained from realizations $x[t+1]\sim \Tr(\cdot| x[t],a[t])$ of the controlled stochastic kernel.  

% In an execution, each consecutive state $x[t+1]\in\spaceX$, is obtained as a realization $x_{t+1}\sim \Tr(\cdot|x,a)$ of the stochastic kernel. 
\begin{definition}[Markov policy $\bm{\pi}$]\, 
A Markov policy $\bm{\pi}$ is a sequence $\bm{\pi} = (\pi[0], \pi[1], \pi[2], \dots)$ where $\pi[t] : \mathbb{X} \rightarrow \mathbb{A}$ maps every state $x \in \mathbb{X}$ to an action $a \in \mathbb{A}$.
\label{def:markovpol}
\end{definition}
A Markov policy $\bm{\pi}$ is \emph{stationary} if $\pi[t]$ does not depend on the time index $t$, 
that is, for all $t$ it holds that $\pi[t] := \pi$ with $\pi : \mathbb{X} \to \mathbb{A}$. More general control strategies may depend on the full history. In this paper, we restrict to finite-memory control strategies depending on the specification, the definition of which follows in the next section.

% \sofie{Define MAS via MDPs}\ruohan{DONE}
% \ruohan{[\textbf{Ruohan: More clarification on structure of policy and connection between $\boldsymbol{\pi}$ and $\mathbf{C}_{\boldsymbol{\pi}}$ are needed here.}]}

\subsection{Counting Linear Temporal Logics}
\label{subsec:cLTL}
We associate a common set of atomic propositions $\AP_c:=\{p_1,\ldots,p_{|\AP_c|}\}$ to all agents $\mdpM^i, i\in \setagent$. The labeling map $L_c:\spaceX_c \rightarrow 2^{\AP_c}$ gives the set of true atomic propositions for $x^i\in \mathbb X_c$, that is, $L_c(x^i)\subset \AP_c$.  For the multi-agent system, we extend this labeling as $L:\spaceX\rightarrow \prod_{i=1}^N 2^{\AP_c}$ with 
$$L(x):=\operatorname{col}(L_c(x^1), \ldots, L_c(x^N)).$$
Let $\Sigma:=\prod_{i=1}^N 2^{\AP_c}$ denote the alphabet over which each labeling function $L(x)$ takes values, i.e., $l=L(x)\in\Sigma$. 

For a given state trajectory $\boldsymbol{x}=x[0],x[1],\ldots$ of the system~\eqref{eq:MAS},  each state can be translated to a specific letter $\letter[t]=L(x[t])$ using the labeling map $L$. Translating all states in $\boldsymbol{x}$ we can obtain the infinite word $\word:=\letter[0],\letter[1],\ldots$. Similarly, state trajectory suffixes denoted as $\boldsymbol{x}_t:=x[t],x[t+1],\ldots$ can be translated into word suffixes $\word_t:=\letter[t],\letter[t+1],\ldots$.

\medskip

As we are interested in the number of agents that satisfy a given atomic proposition as in \cite{sahin2019multirobot}, we define \textit{counting propositions} $[p,m]\in \AP_c\times \mathbb N$ that are satisfied by a given letter $l\in \prod_{i=1}^N 2^{\AP_c}$ iff  $|\big\{i\in \setagent: p\in l^i\big\}|  \geq m$. Combining counting propositions with temporal operators, we introduce counting Linear Temporal Logics based on the following syntactically co-safe syntax fragment adapted from \cite{sahin2019multirobot,sahin2017provably}. 
\begin{definition}[cLTL syntax]\hspace{1mm}
\label{def:cltl}
An sc-cLTL formula $\mu$ over a set of counting propositions $\AP$ with $[p,m]\in \AP$ has syntax 
$$\mu ::=[p,m] \,|\, \neg[ p,m] \,|\,  \mu_1 \wedge \mu_2 \,|\, \mu_1 \lor \mu_2 \,|\, \bigcirc \mu \,|\, \mu_1  \until \mu_2.$$
\end{definition}
The semantics of the syntax can be given for the suffixes $\word_t$. A counting proposition $\word_t \models [p,m]$ holds if $|\big\{i\in \setagent: p\in l[t]^i\big\}|  \geq m$,
%$[p,m]\in \letter_t$,
while a negation $\word_t \models \neg[ p,m]$ holds if $\word_t \not\models [ p,m]$. Furthermore, a conjunction $\word_t \models \mu_1 \wedge \mu_2$ holds if both $\word_t \models \mu_1$ and $\word_t \models \mu_2$ are true, while a disjunction $\word_t \models \mu_1 \lor \mu_2$ holds if either $\word_t \models \mu_1$ or $\word_t \models \mu_2$ is true. Also, a next statement $\word_t \models \bigcirc \mu$ holds if $\word_{t+1} \models \mu$. Finally, an until statement $\word_t \models \mu_1 \until \mu_2$ holds if there exists an $i\in\mathbb{N}$ such that $\word_{t+i} \models \mu_2$ and for all $j \in \mathbb{N}, 0 \leq j < i$, we have $\word_{t+j} \models \mu_1$. An eventual statement $\eventually \mu:= \text{True} \until \mu$ is a derived operator, where $\word_t \models \eventually \mu$ holds if there exists an $k\in \mathbb{N}$ such that $\word_{t+k} \models \mu$.

\begin{example}\hspace{1mm}
Consider the sc-cLTL formula $\mu = \Diamond[p, N/2]$, where $p\in\AP_c$ is an atomic proposition. The counting proposition $[p,N/2]$
% Here, $p$ is an atomic proposition and $[p, N/2]$ 
requires at least $N/2$ agents to satisfy $p$ simultaneously, and the formula $\mu$ specifies that eventually at least $N/2$ agents satisfy $p$.
\end{example}

Note that the above cLTL definition (Def. \ref{def:cltl}) is restricted to the syntactically co-safe fragment of cLTL  defined in \cite{sahin2019multirobot,sahin2017provably}. %The co-safe fragment is the class of  formulas for which every satisfying infinite word $\word\models \mu$ has a finite prefix $\word_[0,t]\models  \mu$ such that 
The co-safe fragment is the class of formulas for which every satisfying infinite word 
$\word \models \mu$ has a finite prefix $\word_{[0,t]}$ that witnesses the satisfaction of the formulae. 

Given a finite prefix $\word_{[0,t]} \in \Sigma^*$ and an infinite suffix 
$\word_{t+1} \in \Sigma^{\omega}$, their concatenation yields a new infinite word 
$\word_{[0,t]} \cdot \word_{t+1} \in \Sigma^{\omega}$, where $\letter[i]$ for $i \leq t$ 
are determined by the prefix and $\letter[t+1], \letter[{t+2}], \ldots$ are determined 
by the suffix $\word_{t+1}$.
A witness $\word_{[0,t]}$ is a finite prefix
such that 
$\word_{[0,t]} \cdot \word_{t+1} \models \mu$ for all $\word_{t+1} \in \Sigma^{\omega}$, 
where $t$ is minimal, i.e., 
$
    t = \min\{ t' \in \mathbb{N} \mid \word_{[0,t']} \cdot \word_{t'+1} \models \mu,\ 
    \forall \word_{t'+1}\in \Sigma^{\omega} \}.
$ The set of all finite witnesses is defined as $\setwitness:=\{\word_{[0,t]} \in \Sigma^* | t\in\mathbb{N}, \word_{[0,t]} \models \mu\}$. In this paper, we consider control strategies defined as follows.
\begin{definition}[Control strategies $\mathrm{C}_{\boldsymbol{\pi}}$] \hspace{1mm}
A control strategy
$\mathrm{C}_{\boldsymbol{\pi}}:=\spaceX \times \Sigma^* \rightarrow \spaceA$
assigns an action $a\in\spaceA$ to each pair $(x,\word_{[0,t]})\in \spaceX \times \Sigma^*$.
% , where $\word_{[0,t]}$ is the finite prefix up to time $t$.     
\end{definition}
A control strategy $\mathrm{C}_{\boldsymbol{\pi}}$ is said to be decoupled if it admits a factored representation: 
\begin{tightdisplay}
$$\mathrm{C}_{\boldsymbol{\pi}}^{\text{dec}}(x,\word_{[0,t]})=\operatorname{col}(\mathrm{C}_{\boldsymbol{\pi}}^1(x^1,\word_{[0,t]}),\ldots,\mathrm{C}_{\boldsymbol{\pi}}^N(x^N,\word_{[0,t]})),$$
\end{tightdisplay}
where the strategy for each agent $\mathrm{C}_{\boldsymbol{\pi}}^i:\spaceX_c\times \Sigma^* \rightarrow \spaceA_c$ depends only on its own state $x^i\in\spaceX_c$ and the shared prefix $\word_{[0,t]}$.

\subsection{Problem Statement}

\label{sec:problem} Given a stochastic homogeneous multi-agent system \eqref{eq:MAS} and an sc-cLTL specification $\mu$, synthesize a control strategy $\mathrm{C}_{\boldsymbol{\pi}}$ and probability guarantee $p_\mu$ such that the specification satisfaction probability of the controlled system initialized as $x_0$ satisfies:
\begin{tightdisplay}
\begin{equation}\label{eq:obj}
    \mathbb P^{x_0}\big(\mdpM \times \mathrm{C}_{\bm{\pi}} \models \mu\big)\geq p_\mu
\end{equation}
\end{tightdisplay}
while exploiting the homogeneity of the multi-agent system to mitigate the computational complexity that grows exponentially with the number of agents.
\section{Satisfaction probability of cLTL}
\label{sec:SPVI}
In this section, we show that the satisfaction probability of an sc-cLTL specification can be computed via leveraging a witness tree in which each vertex is associated with a witness and carries its rank-$1$ tensor probability. We first decompose the satisfaction probability into a sum of witness probabilities, each factorizing over agents. We then introduce a DFA to constrain the policy memory, and show how a witness tree exploits this decomposition.
\subsection{Rank-$1$ Decomposition of Witness Probabilities}
Owing to the co-safety property of $\mu$, the specification satisfaction is witnessed by a finite prefix of the generated word, and such prefixes are exactly the words accepted by the DFA \cite{vasile2017minimum}. The satisfaction probability in \eqref{eq:obj} is equivalent to the probability that the word $\word$ generated by $\mdpM \times \mathrm{C}_{\boldsymbol{\pi}}$ contains a finite witness $\word_{[0,t]}\in\setwitness$, that is
$$ \mathbb P\big(\mdpM \times \mathrm{C}_{\bm{\pi}} \models \mu\big)=\mbP_{\mdpM \times \mathrm{C}_{\boldsymbol{\pi}}}(\exists t\in \mathbb{N} : \word_{[0,t]} \in \setwitness),$$
where $\word_{[0,t]}$ denotes the finite prefix of the word $\word\in \Sigma^{\omega}$ generated by $\mdpM\times \mathrm{C}_{\boldsymbol{\pi}}$ up to time $t$. This probability can be lower-bounded by a finite-horizon probability for any $T\in\mathbb{N}$:
$$\mbP_{\mdpM \times \mathrm{C}_{\boldsymbol{\pi} }}(\exists t \in\! \mathbb{N} \!:\! \word_{[0,t]}\in \setwitness)\!\geq\! \mbP_{\mdpM \times \mathrm{C}_{\boldsymbol{\pi} }}(\exists t \leq\! T\!:\! \word_{[0,t]}\in \setwitness ).$$

Since each word $\word\in\Sigma^{\omega}$ generated by $\mdpM \times \mathrm{C}_{\boldsymbol{\pi}}$ contains at most one minimal witness $\word_{[0,t]}\in\setwitness$, % witnesses with distinct $t$ are mutually exclusive, and the finite-horizon satisfaction probability decomposes as a sum of probabilities over $\setwitness$, as stated in the following lemma.
 the set of witnesses defines a union of disjoint events.  Therefore the finite-horizon satisfaction probability decomposes as a sum of probabilities over $\setwitness$, as stated in the following lemma.

\begin{lemma}[sc-cLTL satisfaction]\label{lem:sum}\hspace{0.5mm}
The probability of the controlled system $\mdpM \times \mathrm{C}_{\boldsymbol{\pi}}$ initialized as $x_0\in \spaceX_0$ satisfying an sc-cLTL specification $\mu$ within time horizon $T$ decomposes as
\begin{tightdisplay}
\begin{equation}
    \mbP^{x_0}_{\mdpM \times \mathrm{C}_{\boldsymbol{\pi} }}(\exists t \leq T: \word_{[0,t]}\in \setwitness )=\sum_{\word_{[0,t]}\in\setwitness, t\leq T} P_{\word_{[0,t]}}^{\mathrm{C}_{\boldsymbol{\pi} }} (x_0),
    \label{eq:summation}
\end{equation}
\end{tightdisplay}
where $P_{\word_{[0,t]}}^{\mathrm{C}_{\boldsymbol{\pi} }}(x_0)$ is the probability that the controlled system $\mdpM \times \mathrm{C}_{\boldsymbol{\pi}}$ generates the finite prefix $\word_{[0,t]}=l[0],\ldots,l[t]$ as a witness in $\setwitness$, that is
$$P_{\word_{[0,t]}}^{\mathrm{C}_{\boldsymbol{\pi} }} (x_0)\!:=\!\mathbb{P}^{\mathrm{C}_{\boldsymbol{\pi} }}(L(x[k])\!=\! l[k],\forall k \! \in \! \llbracket 0,t \rrbracket|x[0]=x_0 ).$$
\end{lemma}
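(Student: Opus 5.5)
The argument is a disjoint-union decomposition of the event $E_T:=\{\exists t\le T: \word_{[0,t]}\in\setwitness\}$. For each finite word $w=\letter[0],\ldots,\letter[t]\in\setwitness$ with $t\le T$, I will introduce the cylinder event
\[
E_w:=\{L(x[k])=\letter[k],\ \forall k\in\llbracket 0,t\rrbracket\}
\]
on trajectories of $\mdpM\times \mathrm{C}_{\boldsymbol{\pi}}$ started at $x[0]=x_0$. By definition of $P^{\mathrm{C}_{\boldsymbol{\pi}}}_{w}(x_0)$ we have $\mathbb P^{x_0}(E_w)=P^{\mathrm{C}_{\boldsymbol{\pi}}}_{w}(x_0)$. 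The first step is the set identity $E_T=\bigcup_{w\in\setwitness,|w|\le T+1}E_w$. This holds because a generated word has a prefix of length at most $T+1$ in $\setwitness$ exactly when its labels agree with some such $w$ on the corresponding time steps.

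The second step, and the main obstacle, is pairwise disjointness of the $E_w$. Two distinct witnesses of the same length differ in some letter, so their cylinders are disjoint. For different lengths, suppose $w=\word_{[0,t]}$ and $w'=\word_{[0,t']}$ with $t<t'$ both lie in $\setwitness$ and both occur in the same generated word. Then $w$ is a proper prefix of $w'$. I will use the minimality built into the definition of a witness: since $w$ already guarantees $w\cdot\word_{t+1}\models\mu$ for every suffix, the minimal time $t'$ attached to $w'$ would have to be at most $t$, which contradicts $t'>t$. So $\setwitness$ is prefix-free, in line with the paper's remark that each word contains at most one minimal witness, and hence distinct $w\neq w'$ in $\setwitness$ give $E_w\cap E_{w'}=\emptyset$. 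I would state this prefix-freeness explicitly, reading $\setwitness$ as the set of minimal good prefixes (equivalently, the words on which the DFA first reaches acceptance).

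The final step combines the two. The index set $\{w\in\setwitness: t\le T\}$ is countable, and finite when $\Sigma$ is finite since $\Sigma^{\le T+1}$ is finite. Countable (here finite) additivity of $\mathbb P^{x_0}_{\mdpM\times\mathrm{C}_{\boldsymbol{\pi}}}$ then gives $\mathbb P^{x_0}(E_T)=\sum_w \mathbb P^{x_0}(E_w)=\sum_w P^{\mathrm{C}_{\boldsymbol{\pi}}}_{w}(x_0)$, which is \eqref{eq:summation}. I would add a brief remark that $E_w$ is measurable, being a finite-dimensional cylinder in the path space induced by the kernel $\Tr$ and the history-dependent strategy $\mathrm{C}_{\boldsymbol{\pi}}$ via Ionescu--Tulcea. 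This is standard and needs no computation.
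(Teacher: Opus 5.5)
Your proof is correct and follows the same route as the paper: you write the event as the union of the label-cylinder events over witnesses of length at most $T+1$, and then use disjointness plus finite additivity to obtain the sum. The paper only asserts that these events are disjoint, so your prefix-freeness argument from the minimality in the witness definition fills in the one step the paper leaves implicit. So does your remark that the witness set must be read as minimal good prefixes rather than all DFA-accepted words.
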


For a finite witness $\word_{[0,t]}=\letter[0],\ldots,\letter[t]$, we denote the projection of it on the $i$-th agent as $\word^i_{[0,t]}:=\letter[0]^i,\ldots,\letter[t]^i$, where $\letter[k]^i=L_c(x^i[k])$ for $k\in \llbracket 0,t \rrbracket$.

\begin{proposition}\label{prop:productP}\hspace{1mm}
Consider the homogeneous multi-agent system \eqref{eq:MAS} and a decoupled control strategy $\mathrm{C}_{\boldsymbol{\pi}}^{\text{dec}}$. Each witness probability $P_{\word_{[0,t]}}^{\mathrm{C}_{\boldsymbol{\pi} }}(x_0)$ in Lemma~\ref{lem:sum} factorizes over agents as 
\begin{equation}
P_{\word_{[0,t]}}^{\mathrm{C}^{\text{dec}}_{\boldsymbol{\pi} }}(x_0) = \prod_{i\in \setagent}
    P_{\word^i_{[0,t]}}^{\mathrm{C}^i_{\boldsymbol{\pi} }}(x^i_0)
    \label{eq:rank1}
\end{equation}
with the per-agent witness probability  
$$P_{\word^i_{[0,t]}}^{\mathrm{C}^i_{\boldsymbol{\pi} }} (x^i_0)\!:=\!\mathbb{P}^{\mathrm{C}^i_{\boldsymbol{\pi} }}(L_c(x^i[k])\!=\! l[k]^i,\forall k \! \in \! \llbracket 0,t \rrbracket|x^i[0]=x^i_0 ).$$ 
\end{proposition}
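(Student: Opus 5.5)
The plan is to write the witness probability as a sum over state trajectories and show that, once the prefix $\word_{[0,t]}$ is fixed, everything factors agent by agent. The key point is that the decoupled strategy $\mathrm{C}^{\text{dec}}_{\boldsymbol{\pi}}$ only couples agents through the shared prefix. On the event $\{L(x[k])=l[k],\ \forall k\in\llbracket 0,t\rrbracket\}$, this prefix is not random: at every time $k\le t$ it equals the fixed word $\word_{[0,k]}$. Hence, along any trajectory in this event, agent $i$ applies $a^i[k]=\mathrm{C}^i_{\boldsymbol{\pi}}(x^i[k],\word_{[0,k]})$, a deterministic function of its own state and time alone. This is exactly the action it would take in its own MDP $\mdpM^i$ under the strategy $\mathrm{C}^i_{\boldsymbol{\pi}}$ when generating the local prefix $\word^i_{[0,k]}$.

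Concretely, the steps are as follows.

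\textbf{Step 1 (trajectory sum).} By the Markov property of the controlled system, expand
\[
P_{\word_{[0,t]}}^{\mathrm{C}^{\text{dec}}_{\boldsymbol{\pi}}}(x_0)=\sum_{x[1],\ldots,x[t]} \mathbf{1}[L(x[0])=l[0]]\prod_{k=0}^{t-1}\Tr\big(x[k+1]\mid x[k],\mathrm{C}^{\text{dec}}_{\boldsymbol{\pi}}(x[k],\word_{[0,k]})\big)\,\mathbf{1}[L(x[k+1])=l[k+1]],
\]
with $x[0]=x_0$. Replacing the realized prefix by $\word_{[0,k]}$ is legitimate because of the indicators: any term in which a label deviates from $\word_{[0,t]}$ is zero.

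\textbf{Step 2 (factor each factor).} Use the product structure $\Tr=\prod_i\Tr_c$ from \eqref{eq:MAS}, the decoupled form of $\mathrm{C}^{\text{dec}}_{\boldsymbol{\pi}}$, and the componentwise definition of $L$. Together these give $\mathbf{1}[L(x)=l]=\prod_i \mathbf{1}[L_c(x^i)=l^i]$, so each factor in the product becomes $\prod_i \Tr_c(x^i[k+1]\mid x^i[k],\mathrm{C}^i_{\boldsymbol{\pi}}(x^i[k],\word_{[0,k]}))\,\mathbf{1}[L_c(x^i[k+1])=l[k+1]^i]$.

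\textbf{Step 3 (exchange sum and product).} The summand is now a product over $i$ of terms depending only on $(x^i[1],\ldots,x^i[t])$, and the summation domain $\spaceX^t=\prod_i\spaceX_c^t$ is itself a product. Distributivity therefore turns the sum of products into a product of sums. The $i$-th sum is precisely the trajectory expansion of $P_{\word^i_{[0,t]}}^{\mathrm{C}^i_{\boldsymbol{\pi}}}(x^i_0)$, which yields \eqref{eq:rank1}.

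The main obstacle is the interpretation in Step 1. The per-agent strategy $\mathrm{C}^i_{\boldsymbol{\pi}}$ depends on the global prefix $\word_{[0,k]}$, which involves the other agents' labels, so the per-agent probability is only well defined relative to the fixed witness. It must be read as the probability in $\mdpM^i$ under the open-loop-in-prefix policy $x^i\mapsto\mathrm{C}^i_{\boldsymbol{\pi}}(x^i,\word_{[0,k]})$, with the global word $\word_{[0,t]}$ treated as a parameter. I would state this convention explicitly, and argue that conditioning on the witness event is what removes the inter-agent coupling. Off this event the coupling is real, which is why the factorization holds per witness and not for the total satisfaction probability in Lemma~\ref{lem:sum}.
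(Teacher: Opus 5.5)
Your proof is correct. The paper itself gives no proof of Proposition~\ref{prop:productP}: the appendix proves only Lemma~\ref{lem:sum} and Theorem~\ref{the1}, and the factorization is treated as immediate from $\Tr=\prod_i\Tr_c$ and the decoupled form of $\mathrm{C}^{\text{dec}}_{\boldsymbol{\pi}}$. So there is no argument to match, and yours supplies the step the paper glosses over.

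As stated, $P^{\mathrm{C}^i_{\boldsymbol{\pi}}}_{\word^i_{[0,t]}}$ is indexed only by the projected word. Yet it depends on the full $\word_{[0,t]}$ through the shared-prefix argument of $\mathrm{C}^i_{\boldsymbol{\pi}}$, or through the DFA state once memory is restricted to $Q$. Your convention makes the statement both well posed and true: you read it under the time-varying policy $x^i\mapsto\mathrm{C}^i_{\boldsymbol{\pi}}(x^i,\word_{[0,k]})$ with $\word_{[0,t]}$ as a parameter, and you observe that the label indicators make this substitution harmless term by term.

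This is also what the paper does implicitly afterwards. In the witness tree and in the proof of Theorem~\ref{the1}, per-agent values are built by the backward recursion $v^i_{k+1}=\op^{\pi^i_q}_{\letter^i}(v^i_k)$, with $q=\qmapping$ of the child vertex. That DFA state is determined by the global letters of the witness, not by $\word^i_{[0,t]}$ alone.

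Your sum--product exchange is the unrolled form of that recursion. An equivalent inductive proof would peel off one step at a time, using that for fixed per-agent actions $\expectation\big[\prod_i f_i(x^{i\prime})\mid x,a\big]=\prod_i\expectation\big[f_i(x^{i\prime})\mid x^i,a^i\big]$ under the product kernel. The direct expansion is a bit more transparent, because it shows in one display where the inter-agent coupling disappears. The inductive version matches the operator form used in the algorithms.
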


Let $P_{\word^i_{[0,t]}}^{\mathrm{C}^i_{\boldsymbol{\pi} }} \in \mathbb{R}^{|\spaceX_c|}$ denote the vector of per-agent witness probabilities and $P_{\word_{[0,t]}}^{\mathrm{C}^{\text{dec}}_{\boldsymbol{\pi} }}\in \mathbb{R}^{\prod_{i\in\setagent} |\spaceX_c|}$ the tensor of joint witness probabilities. Given the joint state space $\mathbb{X}=\prod_{i\in\setagent}\mathbb{X}_c$, the point-wise product in \eqref{eq:rank1} corresponds to an outer product, yielding a rank-$1$ tensor:
% the witness probability over the initial state set $\spaceX_0$ can be represented by a tensor 
% $P_{\word_{[0,t]}}^{\mathrm{C}_{\boldsymbol{\pi} }}\in \mathbb{R}^{\prod_{i=1}^N \spaceX_{c}}$. It follows from Proposition~\ref{prop:productP}, each witness probability tensor admits a rank-$1$ decomposition: 
$P_{\word_{[0,t]}}^{\mathrm{C}^{\text{dec}}_{\boldsymbol{\pi} }}=\bigotimes_{i\in\setagent} P_{\word^i_{[0,t]}}^{\mathrm{C}^i_{\boldsymbol{\pi} }}$. Consequently, by Lemma~\ref{lem:sum}, the finite-horizon satisfaction probability tensor is a summation of rank-$1$ tensors: 
\begin{equation}
    \mbP_{\mdpM \times \mathrm{C}^{\text{dec}}_{\boldsymbol{\pi} }}(\exists t \leq T: \word_{[0,t]}\in \setwitness )=\sum_{\word_{[0,t]},t\leq T} \bigotimes_{i\in\setagent} P_{\word^i_{[0,t]}}^{\mathrm{C}^i_{\boldsymbol{\pi} }}.
    \label{eq:ranksum}
\end{equation}
This rank-$1$ tensor decomposition reduces the per-witness memory cost from $O\left( \prod_{i=1}^{N}|\mathbb{X}_c|\right)$ to $O\left( \sum_{i=1}^{N}|\mathbb{X}_c|\right)$.

\subsection{DFA-constrained Control strategies}
 A deterministic finite automaton (DFA) is defined by the tuple 
$\mathcal{A} = (Q, q_0, \Sigma, \tau, q_f)$, where $Q$ denotes a finite set of states; $q_0\in Q$ is the initial state; $\Sigma$ is a finite alphabet; $\tau : Q \times \Sigma \rightarrow Q$ is a transition function; and $q_f$ is the accepting state. For a given word $\word= l[0],l[1],\ldots$, a run of a DFA defines a trajectory $q[0],q[1],\ldots$ initialized with $q[0]:=q_0$ and evolves according to 
$q[t+1] = \tau(q[t],\letter[t]).$ A finite word $\word_{[0,t]}\in\Sigma^*$ is accepted if the run of DFA ends at $q_f$. The set of finite words accepted by $\DFA$ is the finite-witness set $\setwitness$ defined in Section~\ref{subsec:cLTL}. Leveraging a DFA, we further define $\setwitness_q$ as the set of finite words whose induced run starting from $q$ reaches $q_f$.
Similar to \cite{kupferman2001model} for sc-LTL, we can state the following for sc-cLTL.
    For every sc-cLTL formula $\mu$ in Definition~\ref{def:cltl}, there exists a DFA $\DFA_\mu$ %with input alphabet $\Sigma_{\DFA_\mu}=2^{\AP}$, where $\AP$ is the set of counting propositions of $\mu$, 
    such that a word $\word$ satisfies $\mu$, that is, $\word\models \mu$, if and only if it has a finite prefix $\word_{[0,t]}$ in the language of $\mathcal{A}_\mu$. %such that generated by $\mdpM\times \mathrm{C}_{\boldsymbol{\pi}}$ is accepted by $\DFA_\mu$.

% As mentioned before, we are interested in finding a control strategy $C_{\boldsymbol \pi}$ for which we can find a non-trivial lower bound $p_\mu$. However, searching over the complete set of strategies that map from $\mathbb X\times \Sigma^\ast$ to $\mathbb A$ is infeasible since this includes policies that have infinite memory. Therefore, we consider policies whose memory is constrained to that of a chosen DFA $A_\mu=(Q, q_0, \Sigma, \tau, q_f)$ with per-agent policies
Searching over all control strategies mapping from $\spaceX \times\Sigma^*$ to $\spaceA$ is infeasible since $\Sigma^*$ is unbounded, requiring the strategy to condition on arbitrarily long prefixes. Therefore, we constrain the policy memory to a chosen DFA $A_\mu=(Q, q_0, \Sigma, \tau, q_f)$ with per-agent policies:
\begin{tightdisplay}
\begin{equation}
    \mathrm{C}_{\boldsymbol{\pi}}^i:\mathbb X_c \times Q \rightarrow \mathbb A_c\quad \textmd{ for } i\in \setagent
\end{equation}
\end{tightdisplay}based on which the desired decoupled control strategy $\mathrm{C}_{\boldsymbol{\pi}}^{\text{dec}}$ is defined. 
 Note that such a DFA, modelling $\mu$, is non-unique and increasing $|Q|$ allows each state to represent a smaller set of prefixes leading to that state. 
 % that by increasing the number of states in $Q$, each state can represent a smaller set of prefixes that lead up to that state.

\subsection{Value Iteration via Witness Trees}
We define value functions $\tensorV_{q}^{T}:\spaceX\rightarrow [0,1]$ for a given policy $\mathrm{C}_{\boldsymbol{\pi}}^{\text{dec}}$ as the satisfaction probability within time horizon $T$: 
% For a given policy $\mathrm{C}_{\boldsymbol{\pi}}^{\text{dec}}$, we can define a value function that expresses the satisfaction probability within time horizon $T$ as 
\begin{equation}
\tensorV_{q}^T(x)   = \mathbb{P}^{x}_{\mdpM \times \mathrm{C}_{\boldsymbol{\pi}}^{\text{dec}}}(\exists t \leq\! T:\word_{[0,t] }\in \setwitness_q| x[0]=x) . 
\end{equation} 
The satisfaction probability for an initial state $x_0\in \spaceX_0$ is then given by
% For a given initial state $x_0\in\mathbb X_0$, we can compute the satisfaction probability based on this value function as
\begin{equation}
\mathbb{P}^{x_0}_{\mdpM \times \mathrm{C}_{\boldsymbol{\pi}}^{\text{dec}}}(\exists t \leq\! T\!:\! \word_{[0,t]}\models\mu)=\tensorV_{\bar{q}_0}^{T}(x_0) 
\end{equation}with $\bar q_0= \tau_{\mathcal A}(q_0,L(x_0))$. Since the satisfaction probability decomposes over witnesses (Lemma~\ref{lem:sum}) with each term factorizing over agents (Proposition~\ref{prop:productP}), the value function $\tensorV_{q}^{T} (x)$ admits the rank-$1$ decomposition:
\begin{equation}
    \begin{aligned}
         \tensorV_{q}^T(x)   = \sum_{\word_{[0,t] }\in \setwitness_q,  t\leq T}
         \prod_{i\in\mathcal I}
         \mathbb{P}^{x^i}_{\mdpM^i \times \mathrm{C}_{\boldsymbol{\pi}}^{i}}(\word^i_{[0,t] }| x^i[0]=x^i).\label{eq:probabilities}
    \end{aligned}
\end{equation}

% The value function $\tensorV_{q}^T(x)$ inherits the structure of \eqref{eq:summation}
%  and \eqref{eq:rank1}, that is, 
%  \begin{align}
%      \tensorV_{q}^T(x)   &=\sum_{\word_{[0,t] }\in \setwitness_q, t\leq T}\mathbb{P}^{x}_{\mdpM \times \mathrm{C}_{\boldsymbol{\pi}}^{\text{dec}}}(\word_{[0,t] }| x[0]=x)
%      \end{align}
%      \begin{align}
%          \tensorV_{q}^T(x)   &= \sum_{\word_{[0,t] }\in \setwitness_q,  t\leq T}
%          \prod_{i\in\mathcal I}
%          \mathbb{P}^{x^i}_{\mdpM^i \times \mathrm{C}_{\boldsymbol{\pi}}^{i}}(\word^i_{[0,t] }| x^i[0]=x^i)\label{eq:probabilities}
%  \end{align}
To compute the value function $\tensorV_q^T$ in \eqref{eq:probabilities} efficiently, we associate each witness with a vertex of a tree and store its rank-$1$ probability as the vertex value, following \cite{wang2025unraveling}.
\begin{definition}[Witness tree $\mathcal{G}$ \cite{wang2025unraveling}] \label{def:r1tree}
\hspace{1mm}For a given DFA $\DFA=(Q,q_0,\Sigma,\tau,q_f)$, a witness tree
	$\mathcal{G}=(\mathcal{Z},\mathcal{E},\qmapping, \valuemapping)$ has \begin{itemize}
			\item a set of vertices $\mathcal{Z}$ with elements $z\in \mathcal{Z}$; 
			\item a set of labeled edges $\mathcal{E}$ with elements $ (z,\letter,z')\in\mathcal{E}$;  
			\item a DFA-state mapping $\qmapping:\mathcal{Z}\rightarrow Q$ that maps a vertex $z\in\mathcal{Z}$ to a DFA state $q\in Q$;
			\item a vertex value mapping $\valuemapping:\mathcal{Z}\rightarrow \mathbb{R}^{\prod_{i\in\setagent} |\spaceX_c|}$ that maps a vertex $z\in \mathcal{Z}$ to a rank-$1$ tensor: $\valuemapping(z)  = \bigotimes_{i\in\setagent} v^{i}(z).$
	\end{itemize}
\end{definition}
We call vertices with no outgoing edges \textit{leaf vertices}. The tree is rooted at $q_f$ and each edge $(z,l,z')$ connects a parent $z$ to its child $z'$. A path from a $q$-labeled vertex to the root defines a witness $\word_{[0,t]}\in \setwitness_q$ via its edges. Each product term in \eqref{eq:probabilities} corresponds to a $\bar{q}_0$-labeled vertex $z$, whose per-agent vector value $v^i(z)$ gives the per-agent witness probability $\mathbb{P}^{x^i}_{\mdpM^i \times \mathrm{C}_{\boldsymbol{\pi}}^{i}}(\word^i_{[0,t] }| x^i[0]=x^i)$. For a tree containing all witness of length at most $T$, the tensor value function $\tensorV_q^T$ is the sum of all rank-$1$ vertex values labeled $q$:
$$\tensorV_q^T=\sum_{z:\mathcal{L}_Q(z)=q} \bigotimes_{i\in\setagent} v^i(z).$$
The per-agent vector values $v^i(z)$ are propagated from parent to child via a \textit{per-agent operator} $\op_{\letter^i}^{\pi_q^i}(v^i)$ for a projected letter $\letter^i$ and per-agent policy $\pi^i_q:\spaceX_c \times Q \rightarrow \spaceA_c$:
\begin{equation} \label{eq:op_ori}
    \op_{\letter^i}^{\pi_q^i}(v^i)(x^i):=\expectation^{x^{i'}}[\mathcal{L}_{\letter^i}(x^{i'})v^i(x^{i'}) | x^i,a^i=\pi_q^i(x^i)],
\end{equation}
where $\mathcal{L}_{\letter^i}(x^{i'})=1$ if $L_c(x^{i'})\models \letter^i$, and $0$ otherwise.
\medskip

\par The witness tree (Definition~\ref{def:r1tree}) follows Definition~6 in \cite{wang2025unraveling}, where edge labels were Boolean subformulae grouping all letters that trigger the same DFA transition. For cLTL, each DFA transition is governed by counting propositions $[p,m]$, and the number of letters satisfying a counting proposition grows combinatorially with $N$. For instance, $[p,1]$ with $6$ agents is satisfied by $63$ distinct letters, each contributing to a new vertex per tree expansion step. This combinatorial growth has two consequences: (1) the witness tree becomes prohibitively large, and each vertex stores a tensor over the joint state space $\spaceX$, compounding the memory cost; (2) many of the resulting leaf vertices carry negligibly small probability values, wasting computation on insignificant probability contributions. 
\par
To address these challenges, in the next section, we label edges by individual letters and compress their enumeration via a Binary Decision Diagram (BDD)-based approach, exploit the homogeneity of the MAS through a dual-tree structure that avoids redundant per-agent computation, and prune low-probability vertices to bound tree growth. 

\section{Scalable control synthesis for cLTL via dual trees}
\label{sec:tree}
% \red{
% \begin{itemize}
%     % \item remove all time indexing  for policy in this subsection
%     % \item replace $\letter$ with $\letter$, no $\letter$ should exist 
%     % \item (together with next point) $\mathbb{P}^{x^i}_{\mdpM^i \times \mathrm{C}_{\boldsymbol{\pi}}^{i}}(\word^i_{[0,t] }| x^i[0]=x^i)=v^i_{\bar{q}_0}(z)$
%     % \item connect the tensor value, vector values back to \eqref{eq:probabilities}
%     % \item define value iteration via dfa-informed operator on $v^i(x^i)$ or $v(x)$
%     % \item reasoning on computational burden, give an example on that $[p,1]$ and say..
%     % \item start Section IV, move the previous stuff into Section III
%     % \item single-agent tree with relation
%     % \item remove time indexing in $\mathcal{T}$
%     % \item define single-agent operator
%     % \item Theorem 1 on `with dual tree, and algorithm ...' we have $\mbP_{\mdpM \times \mathrm{C}_{\boldsymbol{\pi} }}(\exists t \in \mathbb{N} : \word_{[0,t]}\in \setwitness)$ to be lower bounded by $\sum \prod_{i\in\setagent} W(R(z,i))$
% \end{itemize}
% }
In this section, we exploit the homogeneity of the MAS to eliminate redundant per-agent computation in the witness tree. We retain the witness-based tree topology via a multi-agent tree and offload the vertex value computation to a single-agent tree that stores only distinct per-agent values. Together, they form a dual-tree structure on which we perform value iteration and policy optimization. We further reduce computational costs via BDD-based letter compression and pruning of low-probability vertices.
% \par
% For homogeneous multi-agent systems, many per-agent witness probabilities $v^i(z)$ are identical across vertices and agents whenever they share the same agent-wise label and policy. To exploit this redundancy, we refer to Definition~\ref{def:r1tree} without the value mapping $v$ as a \textit{multi-agent tree} $\mathcal{G}_m:=(\mathcal{Z},\mathcal{E},\mathcal{L}_Q)$, and populate $v$ via a \textit{single-agent tree} $\mathcal{G}_s$ that stores each distinct per-agent witness probability only once, defined as follows. Together, $\mathcal{G}_m$ and $\mathcal{G}_s$ form the \textit{dual tree} structure. \ruohan{ref Fig.~\ref{fig:trees}}
\subsection{Dual-tree Value Iteration}
For homogeneous MAS, many per-agent witness probabilities $v^i(z)$ in the witness tree (Definition~\ref{def:r1tree}) are identical across vertices and agents whenever they share the same $\letter^i$ and per-agent policy. We define the \textit{multi-agent tree} $\mathcal{G}_{\mathrm{m}}:=(\mathcal{Z},\mathcal{E},\mathcal{L}_Q)$ as a witness tree without the value mapping $v$, and populate its vertex values via a \textit{single-agent tree} $\mathcal{G}_{\mathrm{s}}$, defined as follows. 

\begin{definition}[Single-agent tree]\label{def:single-agent-tree}\hspace{1mm}
Given a set of single-agent policies $\Pi_q$ with elements $\pi:\spaceX_c\rightarrow \spaceA_c$ for all $q\in Q$ and an alphabet set $2^{\AP_c}$, a single-agent tree is defined as a rooted graph $\mathcal{G}_{\mathrm{s}} = (\mathcal{K}, \mathcal{E}_{\mathrm{s}}, W)$, with 
\begin{itemize}
    \item a set of vertices $\mathcal{K}$ with elements $\kappa \in \mathcal{K}$;
    \item a set of labeled edges $\mathcal{E}_{\mathrm{s}}$ with elements $(\kappa,(\letter_{\mathrm{s}},\pi),\kappa')\in\mathcal{E}_{\mathrm{s}}$, where $(\letter_{\mathrm{s}},\pi)\in 2^{\AP_c} \times \Pi_q$ are the labels;
    \item a vertex value mapping $W:\mathcal{K} \rightarrow \mathbb{R}^{|\spaceX_c|}$ that maps a vertex $\kappa\in \mathcal{K}$ to a vector.
\end{itemize} 
\end{definition}
To populate the vertex values of $\mathcal{G}_{\mathrm{m}}$ from $\mathcal{G}_{\mathrm{s}}$, we define a relation between the two trees.
\begin{definition}[Relation between $\mathcal{G}_{\mathrm{m}}$ and $\mathcal{G}_{\mathrm{s}}$]\label{def:relation_tree} \hspace{1mm}
Given a multi-agent tree $\mathcal{G}_{\mathrm{m}} = (\mathcal{Z}, \mathcal{L}_Q, \mathcal{E})$ and a single-agent tree $\mathcal{G}_{\mathrm{s}} = (\mathcal{K}, \mathcal{E}_{\mathrm{s}}, W)$, we define a relation
\[
R \subseteq (\mathcal{Z} \times \setagent) \times \mathcal{K}
\]
that links each vertex-agent pair $(z,i) \in \mathcal{Z} \times \setagent$ to a unique vertex $\kappa \in \mathcal{K}$.
% denoted as $((z, i), \kappa) \in R$. 
\end{definition}
Since each $(z,i)$ maps to a unique $\kappa$, we denote the mapping $R(z,i):=\kappa$. %Note that the policy set $\Pi_q$ in Definition~\ref{def:single-agent-tree} is labeled by $q:=\mathcal{L}_Q(R(\kappa'))$, linking the edge labels of $\mathcal{G}_{\mathrm{s}}$ to the $q$-mode of vertices in $\mathcal{G}_{\mathrm{m}}$. 
The rank-$1$ tensor probability from Definition~\ref{def:r1tree} for the witness associated with vertex $z$ is then the outer product of per-agent vectors:
$$v(z)=\prod_{i\in\setagent} W(R(z,i)),$$
where each $W(R(z,i))\in\mathbb{R}^{|\spaceX_c|}$ gives the vector probability of the $i$-th agent for the witness associated with $z$. 
% tensor probability of the witness associated with $z$ is recovered as
% $$v(z)=\prod_{i\in\setagent} v^i(z) = \prod_{i\in\setagent} W(R(z,i)), $$
% where $v^i(z)=W(R(z,i))$ is the vector value at $z$ of the $i$th agent.
We extend $\mathcal{L}_Q$ to $\mathcal{K}$ by defining $\mathcal{L}_Q(\kappa):=q$ for any $(z,i)$ with $R(z,i)=\kappa$, which is well-defined since all such pairs share the same $q$ mode. The vertex values $W(\kappa)$ are propagated from parent $\kappa$ to child $\kappa'$ in $\mathcal{G}_{\mathrm{s}}$ via the \textit{single-agent operator} $\op^{\pi_q^{i}}_{l^{i}}(W(\kappa))$, which applies the per-agent operator \eqref{eq:op_ori} to the single-agent tree values, defined for a projected letter $\letter^i$ and per-agent policy $\pi:\spaceX_c \rightarrow \spaceA_c$ with $\pi\in \Pi_q, q=\mathcal{L}_Q(\kappa')$ as
\begin{equation}\label{eq:op_singleagent}
  \op^{\pi_q^{i}}_{l^{i}}(W\!(\kappa)\!)(x^{i}) \!:=\!  \mathbb{E}^{{x}^{i'}}[\mathcal{L}_{l^{i}}(x^{i'})W(\kappa)(x^{i'}) \!\mid \!x^{i},a^{i}\!=\!\pi_q^{i}(x^{i})], 
\end{equation}
where $\mathcal{L}_{\letter^i}(x^{i'})=1$ if $L_c(x^{i'})\models \letter^i$, and $0$ otherwise. Together, $\mathcal{G}_{\mathrm{m}}$, $\mathcal{G}_{\mathrm{s}}$ and $R$ form the \textit{dual-tree structure} (cf. Fig.~\ref{fig:trees}), initialized as $\mathcal{G}_{\mathrm{m}0}=( \{1\},\emptyset,\mathcal{L}_Q(1)=q_f)$ and $\mathcal{G}_{\mathrm{s}0}=( \{1\}, \emptyset,W_0  )$ with $W_0(1)=\boldsymbol{1}\in\mathbb{R}^{|\spaceX_c|}$, and $R=\{(1,i),1\}$ for all $i\in \setagent$.
\par
Given a horizon $T$ and a decoupled policy $\mathrm{C}_{\boldsymbol{\pi}}^{\text{dec}}$, we iteratively expand the trees and optimize $\mathrm{C}_{\boldsymbol{\pi}}^{\text{dec}}$. The trees are expanded as
\begin{align}
\label{eq:iter}
{\mathcal{G}_{\mathrm{m}}}_{k+1} = \mathcal{T}^{\pi}({\mathcal{G}_{\mathrm{m}}}_k),\hspace{2mm}{\mathcal{G}_{\mathrm{s}}}_{k+1} = \mathcal{T}^{\pi}({\mathcal{G}_{\mathrm{s}}}_k), \hspace{1mm} k\in \llbracket 0,T-1\rrbracket \nonumber
\end{align}
via Algorithm~\ref{alg:tree-single}, which simultaneously grows $\mathcal{G}_{\mathrm{m}}$ by adding child vertices for each letter-based DFA transition and $\mathcal{G}_{\mathrm{s}}$ by creating new vertices only when a distinct $(\letter^i,\pi)$ pair is encountered. The vertex values in $\mathcal{G}_{\mathrm{s}}$ are then updated bottom-to-top via Algorithm~\ref{alg:value-iter}, applying the single-agent operator along each edge. The following theorem establishes that value iteration on the dual tree lower bounds the specification satisfaction probability.
\begin{theorem}[Dual-tree value functions]\label{the1}\hspace{1mm}
Given a decoupled policy $\mathrm{C}_{\boldsymbol{\pi}}^{\text{dec}}$, a horizon $T$, and the trees $\mathcal{G}_{\mathrm{m}},\mathcal{G}_{\mathrm{s}}$ computed based on Algorithm~\ref{alg:tree-single} and \ref{alg:value-iter}, the specification satisfaction probability of the MAS initialized as $x_0\in\spaceX_0$ in problem statement is lower bounded as 
$$\mathbb P^{x_0}\big(\mdpM \times \mathrm{C}^{\text{dec}}_{\bm{\pi}} \models \mu\big) \geq  \sum_{z:\mathcal{L}_Q(z)=\bar{q}_0}  \prod_{ i\in \setagent}W(R(z,i))(x_0^i),$$
with $\bar{q}_0=\tau(q_0,L(x_0))$. 
\end{theorem}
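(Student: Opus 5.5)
The plan is to chain the finite-horizon lower bound, Lemma~\ref{lem:sum}, and Proposition~\ref{prop:productP}, and then identify each surviving summand with a product of single-agent vertex values read from the dual tree. The only real work is this identification. It goes by induction on the depth of the multi-agent tree and uses the relation $R$ to translate between the two trees.

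\textbf{Reduction to witness sums.} First, $\mathbb P^{x_0}(\mdpM\times\mathrm{C}^{\text{dec}}_{\bm\pi}\models\mu)$ equals the probability that some prefix lies in $\setwitness$. This is at least the probability of the same event restricted to $t\le T$. By Lemma~\ref{lem:sum} and Proposition~\ref{prop:productP}, that finite-horizon probability equals $\sum_{\word_{[0,t]}\in\setwitness,\,t\le T}\prod_{i}P^{\mathrm{C}^i_{\bm\pi}}_{\word^i_{[0,t]}}(x_0^i)$. Because $\bar q_0=\tau(q_0,L(x_0))$ already consumes the first letter, I rewrite the index set as witnesses in $\setwitness_{\bar q_0}$ read from time $1$, which gives exactly the form of \eqref{eq:probabilities}, i.e.\ $\tensorV^T_{\bar q_0}(x_0)$. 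All terms are nonnegative. Hence it suffices to show that the right-hand side of the theorem is a sum over a \emph{subset} of these witness terms, each counted exactly once.

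\textbf{Induction over the trees.} I will prove by induction on the expansion step $k$ that every vertex $z$ of ${\mathcal G_{\mathrm m}}_k$ corresponds to a unique letter path $\word$ from $\mathcal L_Q(z)$ to $q_f$ along the DFA, with distinct vertices giving distinct paths. I will also show that for every agent $i$, $W(R(z,i))(x^i)$ equals the per-agent probability of generating the projected path $\word^i$ under $\pi^i_{\mathcal L_Q(z)}$ from $x^i$.
\begin{itemize}
\item The base case is the root, with $W_0=\boldsymbol 1$ and the empty path.
\item For the step, Algorithm~\ref{alg:tree-single} adds a child $z'$ of $z$ via a letter $\letter$ with $\tau(\mathcal L_Q(z'),\letter)=\mathcal L_Q(z)$. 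It sets $R(z',i)$ to the $\mathcal G_{\mathrm s}$-child of $R(z,i)$ along the edge label $(\letter^i,\pi^i_{\mathcal L_Q(z')})$, reusing an existing vertex if one is present.
\item Algorithm~\ref{alg:value-iter} then assigns $W(R(z',i))=\op^{\pi}_{\letter^i}(W(R(z,i)))$. By the Markov property and \eqref{eq:op_singleagent}, this is exactly the one-step extension of the per-agent path probability.
\end{itemize}
Decoupledness of $\mathrm C^{\text{dec}}_{\bm\pi}$ and homogeneity ensure that this value depends only on $(\letter^i,\pi)$ and the parent's value. That is what justifies sharing a single $\kappa$ across agents and vertices. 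Well-definedness of $\mathcal L_Q$ on $\mathcal K$ is what makes the policy label consistent.

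\textbf{Counting witnesses once.} Distinct $\bar q_0$-labelled vertices give distinct letter sequences, and by DFA determinism these correspond to distinct words of $\setwitness_{\bar q_0}$. Each word has length at most $T$ because the tree has depth at most $T$. Each is a minimal witness because $q_f$ is reached only at the root; I will take the accepting state to be absorbing, so paths do not pass through $q_f$ earlier.

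\textbf{Main obstacle.} The delicate point is this last injectivity and minimality argument, together with the fact that BDD letter compression and pruning only \emph{remove} vertices or group letters without double counting. If pruning or compression ever merged two witnesses into one vertex, the bound could fail. I would therefore state explicitly that pruned subtrees are simply omitted, which only decreases a sum of nonnegative terms. I would also argue that BDD compression is an enumeration device that yields exactly one vertex per individual letter. With this, the right-hand side is a partial sum of nonnegative terms of $\tensorV^T_{\bar q_0}(x_0)$, and the inequality follows.
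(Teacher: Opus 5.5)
Your route is the paper's. You lower-bound by the horizon-$T$ event, expand it with Lemma~\ref{lem:sum} and Proposition~\ref{prop:productP}, and show by induction on the expansion step that $W(R(z,i))$ equals the per-agent witness value. This is the paper's $v^i_k(z)=W_k(R(z,i))$, obtained by matching \eqref{eq:op_ori} against \eqref{eq:op_singleagent}. The paper then simply asserts $\tensorV^T_{\bar q_0}(x_0)=\sum_{z:\mathcal{L}_Q(z)=\bar q_0}\prod_{i}v^i(z)(x_0^i)$, so your injectivity and partial-sum argument is extra care the paper does not give.

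However, the step you single out as delicate is resolved the wrong way. Making $q_f$ absorbing does not keep paths from passing through $q_f$ early; it is exactly what creates such paths. Algorithm~\ref{alg:tree-single} gives a leaf $z$ one child for every transition $(q,\letter,q')$ with $q'=\mathcal{L}_Q(z)$. If $\tau(q_f,\letter)=q_f$ for all $\letter$, the root therefore receives a $q_f$-labelled child for every letter. Every vertex below these children then encodes an extension of a shorter accepted word, and such words are nested events, not disjoint ones. Concretely, take $\bar q_0=q_f$. The root contributes $1$, and its $q_f$-labelled children contribute $\sum_{\letter}\prod_{i}\op^{\pi^i_{q_f}}_{\letter^i}(\boldsymbol{1})(x_0^i)=1$, since the next joint letter has total probability one. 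So after a single expansion the right-hand side is already $2$, and the bound fails. An absorbing $q_f$ would also put non-minimal words into the DFA language, which breaks the disjointness behind Lemma~\ref{lem:sum}.

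The convention you need is the opposite: no tree edge may come from a transition leaving $q_f$. You can get this by sending all transitions out of $q_f$ to a rejecting sink, so that the accepted language is exactly the set of minimal witnesses. Alternatively, restrict the loop in Algorithm~\ref{alg:tree-single} to $q\neq q_f$. With either choice, $q_f$ labels only the root. Each $\bar q_0$-labelled vertex at depth $d\le T$ then encodes the minimal witness $L(x_0)\,\letter[1]\cdots\letter[d]$, and your determinism argument gives injectivity.

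A smaller point concerns BDD compression, which you misdescribe. In the paper the BDD produces conjunctive per-agent formulae \eqref{eq:boolean} that bundle many letters onto one edge, precisely to make $\mathcal{G}_{\mathrm{m}}$ narrower; it does not yield one vertex per letter. The theorem is stated for Algorithms~\ref{alg:tree-single}--\ref{alg:value-iter}, which expand individual letters, so you can drop this remark. If you do want to cover compression, the right argument is multilinearity. For $\alpha=\bigwedge_i\alpha^i$ one has $\op^{\pi}_{\alpha^i}=\sum_{\letter^i\models\alpha^i}\op^{\pi}_{\letter^i}$. Hence $\prod_i\op^{\pi}_{\alpha^i}(\cdot)$ is the sum of the rank-$1$ terms of all letters in $\{\letter:\letter^i\models\alpha^i\ \forall i\}$. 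No double counting occurs as long as the formulae attached to one DFA transition describe disjoint letter sets. Your treatment of pruning, as omission of nonnegative terms, is fine.
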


\begin{figure}[t]
    \centering
    \includegraphics[width=\linewidth]{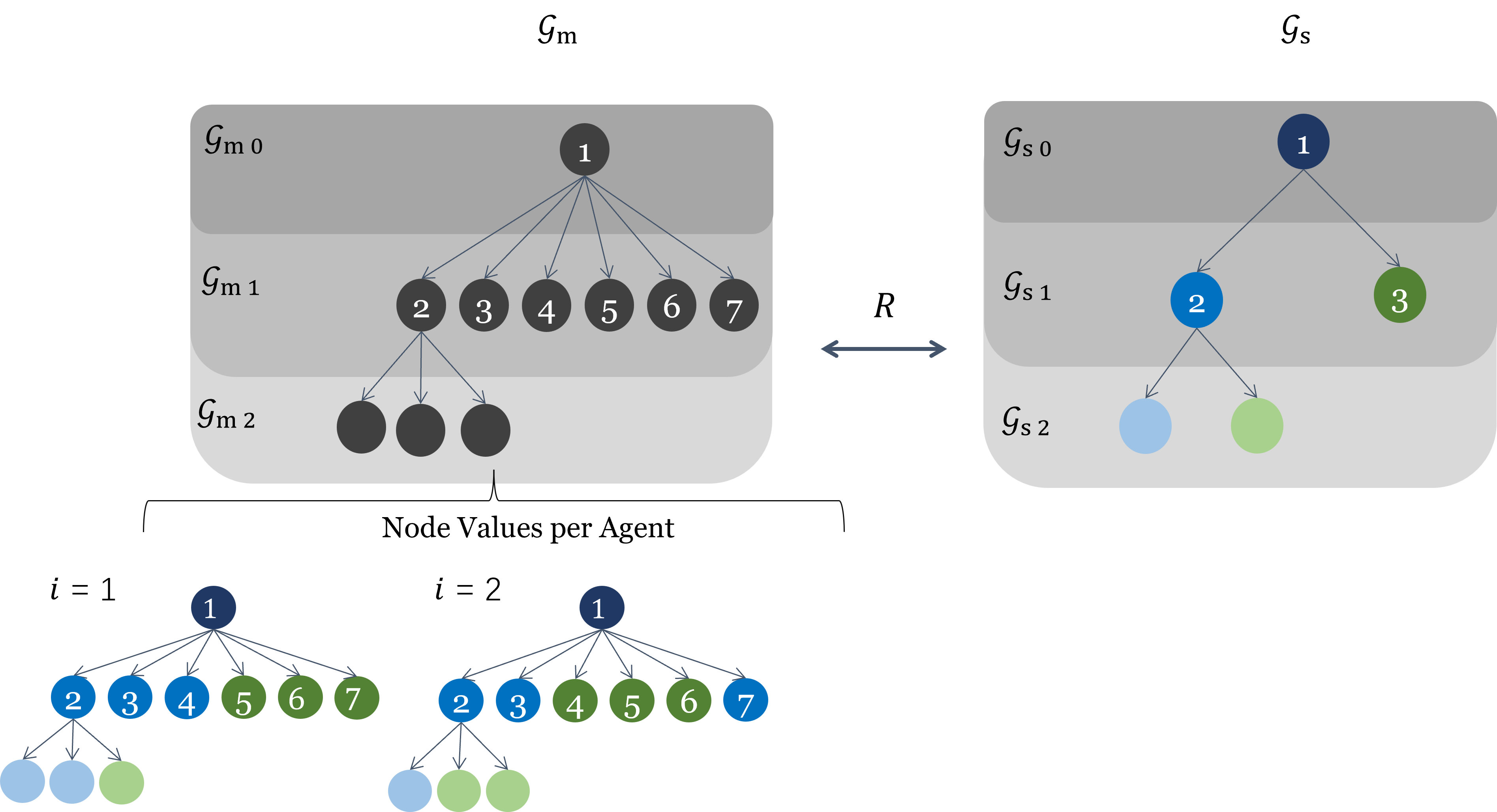}
    \caption{The multi-agent tree $\mathcal{G}_\mathrm{m}$ (top left) and single-agent tree $\mathcal{G}_\mathrm{s}$ (top right) grown simultaneously via Alg.~\ref{alg:tree-single}. As shown in "Node Values per Agent" (bottom), vertices sharing an identical value vector are indicated using the same color and are mapped to a unique vertex $\kappa \in \mathcal{K}$ in $\mathcal{G}_\mathrm{s}$ via $R$.}
    \label{fig:trees}
    \vspace{-0.3cm}
\end{figure}

\begin{algorithm}[htbp]
\caption{Growing Multi-Agent Tree $\mathcal{T}^{\pi}(\mathcal{G}_{\mathrm{m}})$ and Single-Agent Tree $\mathcal{T}^{\pi}(\mathcal{G}_{\mathrm{s}})$} 
\label{alg:tree-single}
\begin{algorithmic}[1]
\renewcommand{\algorithmicrequire}{\textbf{Input:}}
\renewcommand{\algorithmicensure}{\textbf{Output:}}
\Require  $\mathcal{G}_\mathrm{m}$, $\mathcal{G}_\mathrm{s}$ and $R$, DFA $\DFA$, decoupled policy $\mathrm{C}^{\text{dec}}_{\bm{\pi}}$.
\Ensure Expanded $\mathcal{G}_\mathrm{m}$ and $\mathcal{G}_\mathrm{s}$, updated $R$.
\ForAll{$z \in \mathcal{G}_{\mathrm{m}}.Leaves$} \Comment{Grow Multi-Agent Tree}
    \For{$\{(q,\letter,q') \in \tau_{\mathcal{A}} \mid q' = \mathcal{L}_Q(z)\}$}
    \State $\bar z \gets$ Create a new vertex for original tree
    \State $\mathcal{Z} \gets \mathcal{Z} \cup \{\bar z\}$
    \State $\mathcal{E} \gets \mathcal{E} \cup \{(z,l,\bar z)\}$
    \State $\mathcal{L}_Q(\bar z) \gets q$ 
    \ForAll{$i \in \setagent$} \Comment{Grow Single-Agent Tree}
      \State  $\pi(\cdot) \gets \mathrm{C}^{i}_{\bm{\pi}}(\cdot,q)$
      \State $\kappa \gets R(z,i)$    
      \State $\letter_\mathrm{s} \gets \letter^i$ 
      \If{$\exists \kappa' \in \mathcal{K}$ s.t. $\left(\kappa,(\letter_\mathrm{s}, \pi), \kappa' \right) \in \mathcal{E}_{\mathrm{s}}$}
        \State $R \gets R \cup \{((\bar z, i), \kappa')\}$ 
      \Else
        \State $\bar \kappa \gets$ Add new vertex to single-agent tree
        \State $\mathcal{K} \gets \mathcal{K} \cup \{\bar \kappa\}$
        \State $\mathcal{E}_{\mathrm{s}} \gets \mathcal{E}_{\mathrm{s}} \cup \{(\kappa,(\letter_\mathrm{s},\pi),\bar \kappa)\}$
        \State $R \gets R \cup \{((\bar z, i), \bar \kappa)\}$
      \EndIf
    \EndFor
  \EndFor
\EndFor
\end{algorithmic}
\end{algorithm}

% \begin{algorithm}[htbp]
% \caption{Value Iteration on single-agent tree $\mathcal{T}^\pi(\mathcal{G}_{\mathrm{sha}})$}
% \label{alg:value-iter}
% % \begin{algorithmic}[1]
% % \renewcommand{\algorithmicrequire}{\textbf{Input:}}
% % \renewcommand{\algorithmicensure}{\textbf{Output:}}
% % \Require $\mathcal{G}_\mathrm{ori}$, $\mathcal{G}_\mathrm{sha}$, DFA $\DFA$, policy $\pi_q$.
% % \Ensure Updated value vector $W$
% % \State $\mathcal{G}_{\mathrm{sha}}^+ \gets$ \textbf{Grow}$(\mathcal{G}_{\mathrm{sha}})$\Comment{Grow single-agent tree}
% %   \ForAll{$(\kappa, c, \kappa') \in \mathcal{G}_{\mathrm{sha}}^+.\mathcal{E}_{\mathrm{sha}}$ in bottom-to-top order}
% %     \State $W(\kappa') \gets
% %     \mathcal{T}^{\pi_q}_l(W(\kappa))$
% %   \EndFor
% % \end{algorithmic}
% % \end{algorithm}
% \begin{algorithmic}[1]
% \renewcommand{\algorithmicrequire}{\textbf{Input:}}
% \renewcommand{\algorithmicensure}{\textbf{Output:}}
% \Require $\mathcal{G}_\mathrm{ori}$, $\mathcal{G}_\mathrm{sha}$, DFA $\DFA$, policy $\pi_{q,c}$.
% \Ensure Updated value vector $W$
% \State $\mathcal{G}_{\mathrm{sha}}^+ \gets$ \textbf{Grow}$(\mathcal{G}_{\mathrm{sha}})$\Comment{Grow single-agent tree}
%   \ForAll{$(\kappa, c, \kappa') \in \mathcal{G}_{\mathrm{sha}}^+.\mathcal{E}_{\mathrm{sha}}$ in bottom-to-top order}
%     \State $(\kappa_{\mathrm{par}}, c, q) \gets \psi(\kappa')$
%     \State $W(\kappa') \gets \mathcal{T}^{\pi_{q,c}}_l(W(\kappa))$
%     %\State $W(\kappa') \gets
%     %\mathcal{T}^{\pi_q}_l(W(\kappa))$
%   \EndFor
% \end{algorithmic}
% \end{algorithm}
\begin{algorithm}[htbp]
\caption{Value Iteration on Dual Tree $\mathcal{G}_\mathrm{m}$, $\mathcal{G}_{\mathrm{s}}$}
\label{alg:value-iter}
\begin{algorithmic}[1]
\renewcommand{\algorithmicrequire}{\textbf{Input:}}
\renewcommand{\algorithmicensure}{\textbf{Output:}}
\Require $\mathcal{G}_\mathrm{m}$, $\mathcal{G}_\mathrm{s}$, DFA $\DFA$. %\xinyuan{decoupled policy $\mathrm{C}^{\text{dec}}_{\bm{\pi}}$}.
\Ensure Updated $W$
\State $\mathcal{G}_{\mathrm{m}}^+, \mathcal{G}_{\mathrm{s}}^+ \gets \mathcal{T}^\pi(\mathcal{G}_{\mathrm{m}}), \mathcal{T}^{\pi}(\mathcal{G}_{\mathrm{s}})$\Comment{Grow Trees}
  \ForAll{$(\kappa, (\letter_\mathrm{s},\pi), \kappa') \in \mathcal{G}_{\mathrm{s}}^+.\mathcal{E}_{\mathrm{s}}$ in bottom-to-top order}
    \State $W(\kappa') \gets \op^{\pi}_{\letter_\mathrm{s}}(W(\kappa))$
    %\State $W(\kappa') \gets
    %\mathcal{T}^{\pi_q}_l(W(\kappa))$
  \EndFor
\end{algorithmic}
\end{algorithm}

\par \noindent {\bfseries Policy Optimization.} Given $\mathcal{G}_{\mathrm{m}}$ and $\mathcal{G}_{\mathrm{s}}$, we optimize the per-agent policies $\pi_q^i\in\Pi_q$ for each $q\in Q$ as 
% For the $i$-th agent, the policy $\pi^i$ is updated as
% For each mode $q\in Q$, we develop an efficient heuristic optimization to obtain the policy set $\Pi_q$ with elements $\pi\in\Pi_q$
\vspace{-2mm}
 \begin{equation} 
 	\begin{aligned}
 	\pi^{i\ast}_q	\in \arg\max_{\pi^i_q} \sum_{e\in\mathcal E_{q}} \op_{\letter^{i}}^{	\pi^{i}_q }(W(R(z,i))) c_{e}^{i},
 	\end{aligned} \label{eq:pol_base}
 \end{equation}
where $\mathcal E_{q}:=\{e =(\kappa,\letter^i,\kappa')\in \mathcal E| \qmapping(\kappa') = q \}$ and $c^{i}_{e}:= \prod_{j\in \llbracket 1,N\rrbracket \setminus \{i\}} \|\op_{\letter^{j}}^{	\pi^{j}_q }(W(R(z,i)))\|_1.$ The scalar $c_e^i$ weights the contribution of the $i$-th agent at edge $e$ by the total probability mass of all other agents. For computational efficiency, we search for policies that are shared across all agents or subsets of agents, reducing the number of distinct per-agent operators and thereby limiting the width of the $\mathcal{G}_{\mathrm{s}}$.  

\par
\begin{myremark}\hspace{1mm}
    The witness-tree approach in \cite{wang2025unraveling} stores a value vector for every vertex-agent pair, requiring $|\mathcal{Z}| |\setagent|$ vectors. Our proposed dual-tree approach avoids this redundancy by storing each distinct per-agent value vector only once in $\mathcal{G}_{\mathrm{s}}$, reducing the number of stored vectors to $|\mathcal{K}|$ with $|\mathcal{K}| \ll |\mathcal{Z}|$.
    % Besides storing the tree structure itself, method that performs value iteration directly on the witness tree $\mathcal{G}$ require $\mathcal{O}(|\mathcal{Z}||\setagent||\mathbb{X}_c|)$ memory to represent the value tensors. For the proposed dual-tree structure, excluding the memory required for tree structure, which is nearly the same as witness tree, the memory cost consists of two parts, the storage of $W$ on single-agent tree $\mathcal{G}_\mathrm{s}$ with $|\mathcal{K}|$ nodes and the relation $R$, that is, $\mathcal{O}(|\mathcal{K}||\mathbb{X}_c| + |\mathcal{Z}||\setagent|)$. Since $|\mathcal{K}|\ll|\mathcal{Z}|$ and $|\setagent|\ll|\mathbb{X}_c|$, the dual tree structure is more memory-efficient than witness tree, especially when $|\mathcal{K}|/(|\mathcal{Z}||\mathcal{I}|)$ becomes smaller.
\end{myremark}

\subsection{Efficient Tree Compression}
%% Ruohan: I moved policy optimization to the previous subsection
% \subsection{Efficient Tree Compression and Policy Optimization}
 \label{subsec:BDD}
Though the single-agent tree significantly compresses the computation and hence reduces the memory overhead, the growth of the multi-agent tree still brings extra memory costs for storing the tree structure. To address this, we propose a way to group or bundle DFA transitions such that the tree becomes narrower. Additionally, we prune low-valued vertices in the tree.\\
\noindent {\bfseries Bundling DFA transitions for tree compression. }% rank-1 subformulae via BDDs .}
Up to now, we have used combinations of letters $l^i$ for each agent to define unique transitions in the DFA. However, as shown in \cite{wang2025unraveling}, this can be replaced by Boolean formulas that describe a set of letters. 
In \cite{wang2025unraveling}, we assumed that these subformula were given a-priori. For cLTL defined with counting propositions, we still need to construct these Boolean formulas. 
More precisely, we are interested in formulas that combine per-agent Boolean formula as follows
\begin{equation}\label{eq:boolean}
\alpha:=\bigwedge_{i\in\mathcal I} \alpha^i, \quad \alpha^i::= p | \neg \alpha^i |  \alpha_1^i\wedge \alpha^i_2.
\end{equation}

These formulae can still be computed with the single-agent operator in \eqref{eq:op_singleagent}, and lead to a narrower multi-agent tree.
To find appropriate formulae for the transitions, we symbolically represent the set of letters associated to a DFA transition with a BDD \cite{een2006translating}. 
Based on the automatic reordering of the BDD, we can find a set of Boolean formulae satisfying \eqref{eq:boolean}.

% Given the DFA $\mathcal{A} = (Q, q_0, \Sigma, \tau, q_f)$ with the transition $\tau: Q \times \Sigma \rightarrow Q$, we define the set of formulae as $ \alpha:= \{\alpha_1, \alpha_2, \dots, \alpha_{|\alpha|}\}$, in which each formula $\alpha := \{\alpha\}$
% For a letter $l \in \Sigma$, we say that there exists one $\alpha_i$ such that $l \models \alpha_i$, hence the transition of DFA states can be rewritten as $\tau: Q \times \alpha \rightarrow Q$. Given a DFA state $q$, each subformula $\alpha_i$ induces a subset of letters in $\Sigma$, denoted as $l \in B_{\alpha_i} \subset \Sigma$ iff $l \models \alpha_i$, and all letters in $B_{\alpha_i}$ trigger the same DFA transition from $q$. To obtain the explicit letters $l \in B_{\alpha_i} \subset \Sigma$, we encode the subformula $\alpha_i$ as a Boolean function and use BDD to compactly express the letters. For $\alpha_i$, each satisfying assignment represented by a path in the BDD yields a letter $l \in B_{\alpha_i}$. This BDD-based computation is performed as a preprocessing step before value iteration, reducing the cost of full enumeration by sharing common substructure \cite{een2006translating}. Hence we can grow a smaller multi-agent tree with fewer edges to compress the computation.
% \par

\noindent {\bfseries Dual-tree pruning.} The tree pruning approach removes leaf nodes whose corresponding values fall below a given threshold, as illustrated in Alg.~\ref{alg:prune}.
\begin{algorithm}[htbp]
\caption{Tree Pruning $\mathcal{P}(\mathcal{G})$}
\label{alg:prune}
\begin{algorithmic}[1]
\renewcommand{\algorithmicrequire}{\textbf{Input:}}
\renewcommand{\algorithmicensure}{\textbf{Output:}}
\Require $\mathcal{G}_\mathrm{m}$, $\mathcal{G}_\mathrm{s}$, $R$, and thresholds $\theta_{product}$, $\theta_{single}$
\Ensure Pruned $\mathcal{G}_\mathrm{m}$, $\mathcal{G}_\mathrm{s}$, $R$.
\For{$z \in \mathcal{G}_\mathrm{m}.\text{leaves}$}
    \State $score \leftarrow 1.0$
    \ForAll{$i \in \setagent$} 
        \State $\kappa \gets R(z,i)$
        \State $w_{max} \leftarrow \max(W(\kappa))$ 
        \If{$w_{max} < \theta_{single}$} \Comment{$<$ threshold $\theta_{single}$}
            \State $score \leftarrow 0.0$
            \State \textbf{break}
        \EndIf
        \State $score \leftarrow score \cdot w_{max}$
    \EndFor
    \If{$score < \theta_{product}$} \Comment{$<$ threshold $\theta_{product}$}
        \State $\mathcal{G}_\mathrm{m}.\mathcal{Z} \leftarrow \mathcal{G}_\mathrm{m}.\mathcal{Z} \setminus \{z\}$
        \State $\mathcal{G}_\mathrm{m}.\mathcal{E} \leftarrow \mathcal{G}_\mathrm{m}.\mathcal{E} \setminus \{\text{parent}(z), l, z\}$
        \State $R \gets R \setminus \{((z,i),\kappa)\}$
    \EndIf
\EndFor
\end{algorithmic}
\end{algorithm}
By Lemma~3 in \cite{wang2025unraveling}, the resulting sub-tree after pruning preserves a lower bound on the satisfaction probability, while enabling efficient tree-based value iteration. 
% \begin{myremark}\hspace{0.5mm}
    % Let $|\mathcal{Z}|$ and $|\hat{\mathcal{Z}}|$ denote the vertex count of $\mathcal{G}_m$ before and after BDD compression, with $|\hat{\mathcal{Z}}| \ll |\mathcal{Z}|$.
    % % Let $\mathcal{Z}$ denote the vertex set of $\mathcal{G}_{\mathrm{m}}$ without BDD compression and $\hat{\mathcal{Z}}$ the vertex set of $\mathcal{G}_{\mathrm{m}}$ after BDD compression, with $|\hat{\mathcal{Z}}| \ll |\mathcal{Z}|$. 
    % The method in \cite{wang2025unraveling} requires $\mathcal{O}(|\mathcal{Z}| |\setagent| |\spaceX_c|)$ memory. Our proposed dual-tree approach with BDD compression reduces this to $\mathcal{O}(|\mathcal{K}| |\spaceX_c| + |\hat{\mathcal{Z}}| |\setagent| )$ with $|\mathcal{K}| \ll |\hat{\mathcal{Z}}|$, where the first term accounts for the value mapping $W$ of $\mathcal{G}_{\mathrm{s}}$ and the second term for the relation $R$. 
    % % This yields a memory reduction ratio of $$1-\frac{|\mathcal{K}|}{|\mathcal{Z}| |\setagent|} -\frac{|\hat{\mathcal{Z}}|}{|\mathcal{Z}| |\spaceX_c|}.$$ 
    % Since computing with $|\mathcal{Z}|$ is infeasible in practice, the memory comparison in Section~\ref{sec:case} uses $\mathcal{Z}=\hat{\mathcal{Z}}$, so the reported reduction reflects solely the dual-tree gains. Tree pruning (Algorithm~\ref{alg:prune}) further reduces $|\hat{\mathcal{Z}}|$ and $|\mathcal{K}|$, though this reduction is case-dependent and cannot be quantified analytically.
% \end{myremark}

% \noindent {\bfseries Policy Optimization}

\section{Case Study}
\label{sec:case}
%In this section, we consider several case studies to show the effectiveness and scalability of our proposed approach. We compare our approach with the monolithic dynamic programming (DP) and the method in \cite{wang2025unraveling}, known as DFA Tree.
We consider a homogeneous MAS in which each agent evolves according to
\[
x^{i+} = x^i + u^i + w^i, \quad w^i\sim \mathcal{N}(0, 1), \quad \forall i =\llbracket 1, N \rrbracket,
\]
where $x=\operatorname{col}(x^1,\ldots,x^N)$ is the joint state, $u=\operatorname{col}(u^1,\ldots,u^N)$ the joint input, and $w_i$ an independent disturbance.
% where $x_i$ represents the height of drone i, $u_i$ is the control input, and $w_i$ denotes the environment noise. Each agent is a drone that moves only along a straight vertical line, that is, one-dimensional height motion. 
We abstract the MAS as a finite MDP following \cite{haesaert2020robust}, with $\mathbb{X}_c = [-10, 10]$ and $\mathbb{A}_c =[-2, 2]$. We demonstrate our method on four cLTL specifications. All experiments are conducted on a PC with a 13th Gen Intel Core i9-13900HX processor and 16.0 GB of RAM. 

\subsection{Comparison with existing methods}
We consider the following specification to compare the computational performance of monolithic dynamic programming, the witness-tree approach in \cite{wang2025unraveling}, and our dual-tree approach:
\[
\mu_1 = \neg [p_1, N/2] \;\mathsf{U}\; [p_2, N/3],
\]
for which labeling functions are defined as $L_c(x^i)=p_1$ if $x^i \in [2,4]$, $L_c(x^i)=p_2$ if $x^i \in [-4, -2]$. Each agent is abstracted into a finite MDP with $|\spaceX_c|=100$ states. Monolithic dynamic programming stores the value function over the full joint state space, and for $\mu_1$ its exponential memory growth exceeds the 16 GB RAM available on the PC. To validate our dual-tree approach, we report the lower bounds of satisfaction probabilities for three sets of initial conditions with $N=4$ agents in Tab.~\ref{tab:perform_results}. 
\par
We then compare the computational costs of the witness-tree approach as in \cite{wang2025unraveling} and the proposed dual-tree approach with a varying number of agents $N$, as shown in Fig.~\ref{fig:output_compare}. Our dual-tree approach significantly reduces the computational costs in terms of both memory usage and running time, especially in large-scale scenarios. The jumps in Fig.~\ref{fig:output_compare} are caused by the cLTL specification, whose number of letters grows combinatorially with $N$. It is observed that the memory reduction reaches nearly one order of magnitude at $N=8$, and beyond this point, the witness-tree method can no longer be executed as it runs out of memory. Moreover, the relative memory reduction of the dual-tree approach over the witness-tree method increases with $N$, demonstrating greater efficiency gains for larger-scale MAS.

% Although the computational costs of both approaches increases with the number of agents, our dual-tree approach achieves substantial savings, indicating improved scalability. 
\begin{figure}[t]
    \centering
    \includegraphics[width=0.7\linewidth]{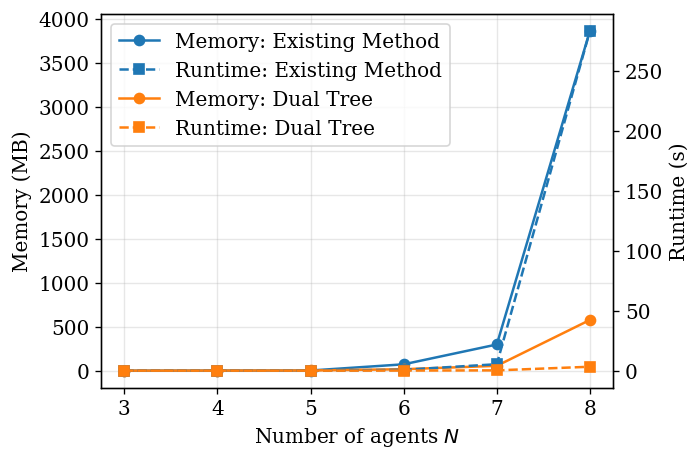}
    \caption{Comparison of existing method \cite{wang2025unraveling} and dual- tree approach in terms of memory usage and runtime for $\mu_1$ with an increasing number of agents.}
    \label{fig:output_compare}
     \vspace{-0.5cm}
\end{figure}

% \begin{tikzpicture}[x=0.38cm,y=1cm]
% \draw[thick,->,>=Stealth,black] (-10.5,0) -- (10.8,0);

% \foreach \x in {-10,-4,-2,0,2,4,10}{
%     \draw[thick,black] (\x,0.12) -- (\x,-0.12);
%     \node[below] at (\x,-0.12) {\small $\x$};
% }
% \foreach \x in {-2.1,-1.9,0.1,2.4}{
%     \node[red] at (\x,0) {\Large $\ast$};
% }
% \foreach \x in {-1.8,-1.7,1.8,1.7}{
%     \node[blue] at (\x,0) {\Large $\ast$};
% }

% \foreach \x in {2.3,1.0,1.5,0}{
%     \node[orange] at (\x,0) {\Large $\ast$};
% }
% \end{tikzpicture}

\begin{table}[htbp]
    \centering
    \caption{Lower bounds of satisfaction probabilities of $\mu_1$.} 
    \label{tab:perform_results}
    \begin{tabular}{cccc} 
        \toprule
        $p_\mu$ & $1.0$ & $0.6051$ & $0.3382$  \\ 
        \midrule
        $x^1[0]$ & $-2.1$ & $-1.8$ & $2.3$ \\
        $x^2[0]$ & $-1.9$ & $-1.7$ & $1.0$ \\
        $x^3[0]$ & $0.1$ & $1.8$ & $1.5$ \\
        $x^4[0]$ & $2.4$ & $1.7$ & $0$\\
        \bottomrule
    \end{tabular}
\end{table}

\subsection{Scalability of dual-tree value iteration}
To further show the scalability of our approach, we consider specifications as follows:
\begin{equation}
\begin{aligned}
    \mu_2 =
    &[p_1, N]\;
    \mathsf{U}\;
    ([p_2, 1] \land [p_1, N]),\\
    \mu_3 &= \bigwedge_{t=0}^{5}\bigcirc^t [p_1, N].
\end{aligned}
\nonumber
\end{equation}
The labeling functions for $\mu_2$ and $\mu_3$ are defined as follows. For $\mu_2$, $L_c(x^i) = p_1$ if $x^i \in [-5,5]$, and $L_c(x^i) = p_2$ if $x^i \in [-2,2]$. For $\mu_3$, $L_c(x^i) = p_1$ if $x^i \in [-5,5]$. For both $\mu_2$ and $\mu_3$, we abstract each agent using a finite MDP with $|\spaceX_c|=100$ states. The computational costs of the proposed dual-tree approach are evaluated with the number of agents ranging from $3$ to $18$. As shown in Fig.~\ref{fig:output_scalability34}, both memory usage and runtime increase approximately linearly with the number of agents $N$, demonstrating the scalability of our dual-tree approach.
%\siyuan{In particular, for $\mu_3$, they increase from $2.7393$~MB and $0.0088$~s at $N=3$ to $16.3769$~MB and $0.04778$~s at $N=18$. For $\mu_4$, they increase from $3.9242$ MB and $0.01260$ s to $23.2009$ MB and $0.0734$~s over the same range, Siyuan: this sentence can be removed if space is limited.} 

We then consider a specification with higher complexity: $$\mu_4 = [p_1, N/2] \;\land\; \Diamond [p_2, N/4]$$ with labeling functions $L_c(x^i) = p_1$ if $x^i \in [2,4]$, and $L_c(x^i) = p_2$ if $x^i \in [-4,-2]$. This added complexity arises because the intermediate counting thresholds in $\mu_4$ ($N/2$ and $N/4$) produce far more satisfying letter assignments than the extreme thresholds ($N$ and $1$) in $\mu_2$ and $\mu_3$. As shown in Fig.~\ref{fig:output_scalability5}, though the memory usage and runtime grow more rapidly with the number of agents, which is caused by the combinatorial explosion in the number of BDD-generated letters as $N$ increases, the dual-tree method remains computationally feasible, demonstrating its ability to handle specifications with higher complexity.  

 \begin{figure}[t]
    \centering
    \includegraphics[width=0.7\linewidth]{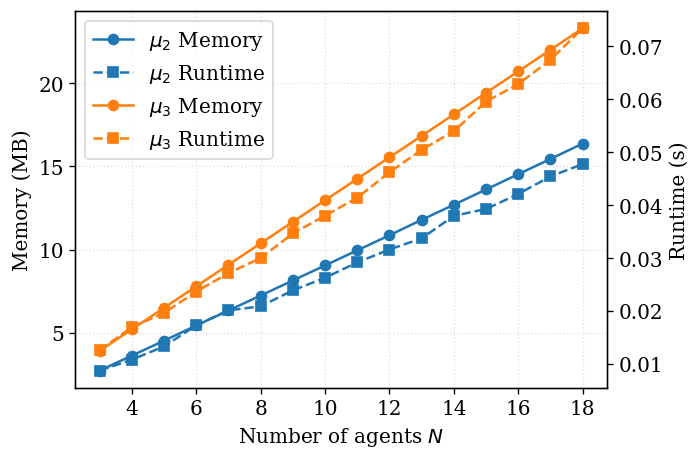}
    \caption{Scalability of the dual-tree approach for $\mu_2$ and $\mu_3$ with an increasing number of agents.}
    \label{fig:output_scalability34}
     \vspace{-0.5cm}
\end{figure}

 \begin{figure}[t]
    \centering
    \includegraphics[width=0.7\linewidth]{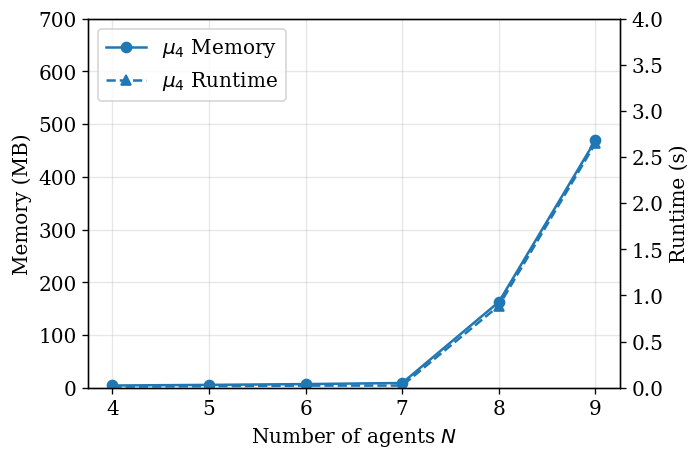}
    \caption{Scalability for $\mu_4$ with an increasing number of agents.}
    \label{fig:output_scalability5}
    \vspace{-0.5cm}
\end{figure}

\section{Conclusion and Future Work}
\label{sec:conclusion}
In this paper, we address the correct-by-design synthesis problem for decoupled homogeneous stochastic MAS under counting LTL specifications. We show that the satisfaction probability admits a structured tensor decomposition, and develop a dual-tree structure that exploits this decomposition for efficient value iteration. 
% We first showed that the satisfaction probability can be represented in the form of a tensor decomposition. Building on this representation, we developed a dual-tree structure for efficient value iteration, thereby enabling a compact and computationally efficient implementation of the synthesis procedure. 
Simulation results demonstrate that the proposed approach substantially reduces both the memory usage and computational overhead compared with existing methods, and scales 
effectively to large-scale scenarios. Future work will focus on addressing the complexity induced by DFA construction for cLTL specifications by investigating more efficient tree-pruning strategies to further enhance scalability.
\vspace{-0.1cm}

\bibliographystyle{IEEEtran}
\bibliography{references}
%\appendices

\appendix
\section{Proof}
\textbf{Proof of Lemma.~\ref{lem:sum}.}
\begin{proof}
\begin{align}\notag
    &\mbP^{x_0}_{\mdpM \times \mathrm{C}_{\boldsymbol{\pi} }}(\exists t \leq T: \word_{[0,t]}\in \setwitness )\\
    &= \mathbb{P}^{\mathrm{C}_{\boldsymbol{\pi}}}\!\left(\bigcup_{\word_{[0,t]}\in\setwitness, t\leq T}\{L(x[k]) = l[k],\, \forall t \! \in \! \llbracket 0,t \rrbracket |x[0] = x_0\}\right) \\\label{eq:mutualsum}
    &= \sum_{\word_{[0,t]}\in\setwitness, t\leq T}\!\mathbb{P}^{\mathrm{C}_{\boldsymbol{\pi} }}(L(x[k])\!=\! l[k],\forall k \! \in \! \llbracket 0,t \rrbracket|x[0]=x_0 )\\
    &=\sum_{\word_{[0,t]}\in\setwitness, t\leq T} P_{\word_{[0,t]}}^{\mathrm{C}_{\boldsymbol{\pi} }} (x_0)
\end{align}
where the equality in Eq.~\eqref{eq:mutualsum} holds since the set of witnesses
defines a union of disjoint events.
\end{proof}

\textbf{Proof of Theorem~\ref{the1}.}
\begin{proof}
% We prove the statement by induction on $k$.\\
% \emph{Base case.} 
For all $i\in \setagent$, $$\mathcal{G}_{\mathrm{m}0} = (\{1\}, \emptyset, q_f), \mathcal{G}_{\mathrm{s}0} = (\{1\}, \emptyset, \boldsymbol{1}), R(1,i) = 1. $$ It is then trivial that for all $i\in \setagent$, $$v_0^i(z_0)(x_0^i) = \mathbf{1} = W_0(R(z_0,i))(x_0^i).$$
Assume that the following holds 
\[
  v^i_k(z)(x_0^i) = W_k\big(R(z,i)\big)(x_0^i) = W_k(\kappa)(x_0^i), 
  \quad\forall (z,i)\in\mathcal{Z}\times\mathcal{I}.
\]    
We let $(z,l,\tilde z)$ be any edge created at this step, with
$q=L_Q(\tilde z)$ and $\tilde\kappa=R(\tilde z,i)$. Then for $k+1$, we have 
\begin{equation}
\begin{aligned}
v^i_{k+1}(\tilde z)(x^i) &= \op_{\letter^i}^{\pi_q^i}(v_k^i(z))(x^i)\\
&=\expectation^{x^{i'}}[\mathcal{L}_{\letter^i}(x^{i'})v_k^i(z)(x^{i'}) | x^i,a^i=\pi_q^i(x^i)] \\
&=\!  \mathbb{E}^{{x}^{i'}}[\mathcal{L}_{l^{i}}(x^{i'})W_k(\kappa)(x^{i'}) \!\mid \!x^{i},a^{i}\!=\!\pi_q^{i}(x^{i})]\\
&= \op^{\pi_q^{i}}_{l^{i}}(W_k\!(\kappa)\!)(x^{i}) \! = W_{k+1}(\tilde \kappa)(x^i)
\end{aligned}
\end{equation}
Hence, we conclude that for $k=0,1,\dots,T$ and all
$(z,i)\in\mathcal{Z} \times \mathcal{I}$, the following holds
\[
  v^i(z)(x_0^i) = W\big(R(z,i)\big)(x_0^i).
\]
By Lemma~\ref{lem:sum} and Proposition~\ref{prop:productP}, we have 
\begin{equation}
\begin{aligned}
\mathbb P^{x_0}\big(\mdpM \times \mathrm{C}^{\text{dec}}_{\bm{\pi}} \models \mu\big) &\geq \mbP^{x_0}_{\mdpM \times \mathrm{C}_{\boldsymbol{\pi} }}(\exists t \leq\! T\!:\! \word_{[0,t]}\in \setwitness) \\
&= \tensorV_{q}^T(x)\\
&=\sum_{z:\mathcal{L}_Q(z)=\bar{q}_0}  \prod_{ i\in \setagent}v^i(z)(x_0^i)\\
&=\sum_{z:\mathcal{L}_Q(z)=\bar{q}_0}  \prod_{ i\in \setagent} W(R(z,i))(x_0^i).
\end{aligned}
\end{equation}

\end{proof}

\end{document}